\newtheorem{lemma}{Lemma}[section]
\newtheorem{theorem}{Theorem}[section]
\newtheorem{definition}{Definition}[section]
\date{}
\providecommand{\keywords}[1]{\small \textbf{\textbf{Keywords --}} #1}
\newcommand{\1}{\mathbf{1}} 
\newcommand{\CM}{\mathcal{C_M}}
\newcommand{\R}{\mathbb{R}}
\renewcommand{\P}{\mathbb{P}}
\title{Opinion response functions are key to understanding tipping of social conventions}
\author{Sarah K. Wyse$^{\star}$ and Eric Foxall}
\begin{document}

\maketitle

Department of Computer Science, Mathematics, Physics and Statistics, Irving K. Barber Faculty of Science, University of British Columbia Okanagan, SCI354, 1177 Research Road, Kelowna V1V 1V7, BC, Canada\\
$^{\star}$ Corresponding author. \\
\textit{Email addresses}: swyse@student.ubc.ca (S.K. Wyse), eric.foxall@ubc.ca (E. Foxall).\\
\textit{ORCID IDs}: 0000-0002-5500-2711 (S.K. Wyse), 0000-0003-1610-6662 (E. Foxall).

\begin{abstract}
    The extent to which committed minorities can overturn social conventions is an active area of research in the mathematical modelling of opinion dynamics. Researchers generally use simulations of agent-based models (ABMs) to compute approximate values for the minimum committed minority size needed to overturn a social convention. In this manuscript, we expand on previous work by studying an ABM's mean-field behaviour using ordinary differential equation (ODE) models and a new tool, opinion response functions. Using these methods allows for formal analysis of the deterministic model which can provide a theoretical explanation for observed behaviours, e.g., coexistence or overturning of opinions. In particular, opinion response functions are a method of characterizing the equilibria in our social model. Our analysis confirms earlier numerical results and supplements them with a precise formula for computing the minimum committed minority size required to overturn a social convention. 

\end{abstract}

\keywords{Agent-based models, Dynamical processes, Stationary states} 

\section{Introduction} \label{Sect:Introduction}
People are strongly influenced by the opinions and behaviours of those around them and often underestimate the strength of their influence on others \citep{Flynn2008, Bohns2013}. In particular, it has been found that, in specific scenarios, committed minorities making up only 10\% of the total population have the ability to upset a social convention \citep{Xie2011}. Opinion dynamics and the study of tipping points of social conventions are active areas of research in mathematical modelling \citep{Granovetter1978, Galam2007, Centola2018}. Applications of these models can be found in, e.g., linguistics \citep{Baronchelli2006}, vaccination decisions \citep{Papst2022}, and climate action \citep{Constantino2022, Galam2010}.

Individuals in these social models are often assumed to be willing to change their opinions based on decision making processes, e.g., a cost-benefit analysis \citep{Granovetter1978}, following the local majority \citep{Galam2007, Galam2010, Galam2020}, or considering an external influence such as mass media \citep{Andersson2021}. Conversely, an individual who is unwilling to change their opinion is called ``committed'' or ``inflexible''. \citet{Galam2007} state that they are the first to study the impact of adding committed individuals to an opinion dynamics model. For a finite population size, it has since been found that when a committed minority is large enough to reach a critical mass, it can trigger a tipping event \citep{Galam2007, Xie2011, Centola2018, Galam2010, Galam2020}, i.e., an abrupt change in the majority opinion.

Opinion dynamics systems are often studied using an agent-based model (ABM) approach. ABMs consider agents interacting under a set of rules to investigate the behaviour of the overall system. Examples of such models are bounded confidence models \citep{Granovetter1978, Deffuant2000, Cao2015}, the Voter model \citep{Holley1975, Sood2005, Jedrzejewski2018, Majmudar2019}, and the naming game \citep{Baronchelli2006, Xie2011, Centola2018}. Each of these models can be used to study how social conventions are formed through coordination in pairs or small groups. These models may use pairwise interactions in which both individuals update their opinions \citep{Deffuant2000, Cao2015} or they may define a speaking-listening interaction where only the listeners update their opinions \citep{Centola2018}. Other models define interactions as occurring in groups of three where all of the participating agents take on the most frequent opinion in the group \citep{Galam2007, Galam2010}. Results from each of these model types show that consensus of opinion can be achieved, but generally provide little formal analysis of the model dynamics.

One way to gain insight on the driving mechanisms behind model behaviour is employing a well-mixed assumption \citep{Kurtz1976}. Under this assumption, modellers use ordinary differential equations (ODEs) to study the mean-field behaviour of ABMs \citep{Sood2005, Tessone2013, Baumgaertner2018a, Foxall2018}. This approach can allow for quantification of ABM behaviours such as average consensus time, steady states, and bifurcation structure. Some authors suggest that ODE modelling is especially useful when there is limited empirical data since ODE models require fewer assumptions on parameter values as compared to ABMs \citep{vandenBergh2019}. Additionally, a simpler ODE model can provide insights into behaviours of more complex models or determine analytical bounds for plausible ranges of the true behaviour \citep{Wimsatt2007, Smaldino2017}.

In this paper, we aim to fill the gap in formal analysis by using ODE models and a new tool, namely, opinion response functions. Opinion response functions consider a speaking-listening interaction as two separate processes (i.e., a speaking process and a listening process). This decoupling of processes \citep{Diekmann2003} allows us to more easily analyze the equilibria of the system.

The sections in this manuscript are organized as follows. In Section \ref{Sect:Methods}, we describe the ABM social model on which our work is based, its well-mixed ODE approximation, and our new tool, opinion response functions. In Section \ref{Sect:Results}, we present results from each of these methods. Lastly, in Section \ref{Sect:Discussion}, we discuss the resulting model behaviour and possible extensions to this work. An earlier version of this manuscript appears in Chapter 2 of the MSc thesis \citet{Wyse2024}.

\section{Methods} \label{Sect:Methods}
\subsection{General Framework} \label{Sect:GeneralFramework}
The general framework we adopt for each of our social models is based on the agent-based opinion dynamics model from \citep{Centola2018}. We use the same ABM as the one in that reference. The key terms and definitions are listed in Table \ref{tab:definitions}. We begin with a detailed description of the model.

Individuals in the model can hold one of two opinions, $A$ or $B$. The individuals participate pairwise in one-sided interactions where one individual is randomly assigned as the speaker, making the other individual the listener. Each individual has a memory bank that is a string of memories of length $M$ where each ``memory'' corresponds to one of the two available opinions.

Memory banks are updated as follows. After each interaction, the listener adds the speaker's opinion to their memory bank in a first-in-first-out (FIFO) process. An individual's memory bank is thus a finite history of the opinions they have heard most recently. In other words, each individual's memory bank consists of the opinions spoken in that individual's last $M$ listening interactions. 

The opinion an individual holds and, when applicable, speaks, is the memory that appears most frequently in their memory bank. When $M$ is even, the frequencies of the $A$ and $B$ opinions can be equal. We assume that these undecided individuals speak $A$ and $B$ equally often. The bulk of the population does not inherently favour one opinion over the other, and individuals within this group are willing to update their memory banks and, thus, their opinions. We call this group the uncommitted population and note that individuals in this population may hold either opinion. The remainder of the population is the committed minority and we denote the proportion of individuals in this group by $\mathcal{C_M}\in[0,1]$. Without loss of generality, we set these individuals to hold only $A$ memories in their memory banks. Individuals in the committed minority do not update their memory banks or opinions. We note that we could have equivalently set the committed minority to only hold opinion $B$.

\begin{table}[tbph]
    \centering
    \caption{Key terms used in our social model. }
    \begin{tabular}{lcl}
         Term & Variable & Definition \\
         \hline 
         Memory &  & An entry held by an individual, \\
         & & \qquad can be of type $A$ or $B$ \\
         Memory Bank &  & An individual's collection of memories \\
         Memory Bank Length & $M$ & The number of memories in an \\
         & & \qquad individual's memory bank \\
         Opinion & $O$ & The most frequent memory in an \\ 
         & & \qquad individual's memory bank \\
         Speaking rate & $r_i$ & The speaking rate of opinion $i$ in \\
         & & \qquad the population \\
         Committed Minority & $\CM$ & A small group of individuals who \\
          & & \qquad refuse to update their memory \\
          & & \qquad banks and, thus, their opinions \\
         Social Convention & & The opinion held by the majority \\
         & & \qquad of the population \\
         Consensus & & The entire population holding \\
         & & \qquad the same opinion \\
         \hline
    \end{tabular}
    \label{tab:definitions}
\end{table}

Now, we describe the initial condition and possible resulting behaviour for our model. We aim to investigate the conditions under which this committed minority can overturn a pre-existing social convention, i.e., the opinion held by the majority of the population. As such, we initialize the uncommitted individuals to only hold $B$ in their memory banks which means they also all hold opinion $B$. We say the social convention is overturned when opinion $B$ is no longer held by a majority of the uncommitted population. The minimal critical mass, $\mathcal{C_M^*}$, is the smallest proportion of the committed minority required to cause this tipping event. 

\subsection{Agent-Based Model} \label{Sect:ABM_Methods}
Here we provide the details needed to rigorously define and simulate the ABM as given in \citep{Centola2018}. We track the state of each agent using a continuous-time Markov chain (CTMC) given by
\[(X(1,t), X(2,t), \ldots, X(P,t))\] where $X(j,t)$ is the state of agent $j$ at time $t$. The state space for each agent is $S:=\{(x_1,x_2,\ldots,x_M): x_i\in\{A,B\} \ \forall i \}$ and the state space for the entire population is $S^P$ where $P$ is the total number of agents (i.e., the population size). For our simulations, we set $P=10,000$. Let $P_A,P_B$ denote respectively the number of uncommitted agents who hold opinion $A,B$. When $M$ is even, it is possible for some agents to have memory banks that contain an equal number of $A$ and $B$ memories. These agents are undecided; let $P_U$ denote the number of such agents. Undecided agents speak $A$ or $B$ at each time step each with probability $1/2$. Agents $1,...,\lfloor\mathcal{C_M}P\rfloor$ are committed and the other agents are uncommitted. Committed agents are initialized with memory bank $AAA...A$ and never change their opinion. An uncommitted agent with state $(x_1,x_2,\ldots,x_M)$ updates to $(A,x_1,\ldots,x_{M-1})$ at rate $\lfloor\mathcal{C_M}P\rfloor/P + (P_A+0.5P_U)/P$ and updates to $(B,x_1,\ldots,x_{M-1})$ at rate $(P_B+0.5P_U)/P$. 

For each simulation, we choose a memory bank length $M$ and committed minority proportion $\CM$. We note that we run many simulations in which we vary these values so that we can determine the location of $\CM^*$ for $M\in[1,70]$. Simulations of the ABM are run as follows. We use the initial conditions described in Section \ref{Sect:GeneralFramework}. Specifically, the committed minority proportion of the population is $\mathcal{C_M}$ and has memory banks full of opinion $A$, while the proportion of the rest of the population is $1-\mathcal{C_M}$ and has memory banks full of opinion $B$. In each time step, we perform $P/2$ asynchronous interactions so that each agent has, on average, one interaction (speaking or listening) per time step. Each interaction consists of the following steps. First, we randomly choose two agents and set one to be the speaker and the other to be the listener. Second, we determine whether or not the listener is a committed agent and update the listener's memory bank accordingly. Listeners in the committed minority do not update their memory bank, while listeners in the uncommitted population drop their oldest memory and insert the speaker's opinion as the newest entry in their memory bank. Third, and finally, we check whether the new memory changes the listener's opinion and update their opinion accordingly. These three steps are repeated until all $P/2$ interactions are complete. This process is repeated until the simulation reaches a pre-determined end time.

\subsection{Ordinary Differential Equation Model} \label{Sect:ODE_Methods}
When an ABM population is large and well-mixed, the mean-field behaviour of the model can be understood using compartmental ODE models in which the rate of movement between compartments is based on the law of mass action \citep{Kurtz1976}. This takes the following form for the above ABM. Let $Y_x$ denote the number of uncommitted individuals with opinion state $x$ and for $O\in \{A,B\}$ let
\begin{align}\label{eq:sigx}
\sigma_O(x_1,\dots,x_M) = (O,x_1,\dots,x_{M-1})
\end{align}
which is the updated state of an individual after hearing opinion $O$. Let $e_x$ denote the unit basis vector in $\R^S$ with $1$ in the $x$ entry and $0$ elsewhere. Summing transition rates over the $Y_x$ uncommitted individuals with opinion state $x$, $Y:=(Y_x)_{x \in S}$ has transition rates
\[q(Y,Y+Z) = \begin{cases}
    (\lfloor\CM P \rfloor + (P_A+0.5P_U))Y_x/P & \text{if} \ Z = e_{\sigma_A(x)}-e_x \\
    (P_B + 0.5 P_U)Y_x/P & \text{if} \ Z = e_{\sigma_B(x)}-e_x \\
    0 & \text{otherwise}.
\end{cases}\]
$Y$ is a Markov chain, since its transition rates depend only on its present value: if $N(x)=\#\{i\colon x_i=A\}$ denotes the number of occurrences of $A$ in opinion state $x$ then
\[P_A = \sum_{x\colon N(x)<M/2}Y_x, \ \ P_U = \sum_{x\colon N(x)=M/2}Y_x \ \ \text{and} \ \ P_B=\sum_{x\colon N(x)>M/2}Y_x.\]
For $y\in \R^S$ define functions $p_A,p_U,p_B$ in the same way, i.e.,
\begin{align}\label{eq:pAy}
p_A(y) = \sum_{x\colon N(x)<M/2}y_x, \ \ p_U(y) = \sum_{x\colon N(x)=M/2}y_x \ \ \text{and} \ \ p_B(y)=\sum_{x\colon N(x)>M/2}y_x.
\end{align}
Define the rate functions $f_Z:\R^S\to \R_+$ by
\[f_Z(y) = \begin{cases}
    (\CM  + p_A(y) + 0.5p_U(y))y_x & \text{if} \ Z = e_{\sigma_A(x)}-e_x \\
    (p_B(y) + 0.5 p_U(y))y_x & \text{if} \ Z = e_{\sigma_B(x)}-e_x \\
    0 & \text{otherwise}.
\end{cases}\]
If $\CM$ is an integer multiple of $1/P$ then transition rates satisfy
\[q(Y,Y+Z) = Pf_Z(Y/P)\]
i.e., $Y$ is a density-dependent Markov chain in the sense of \citep{Kurtz1976}. Defining the vector field
\[F(y) := \sum_Z Zf_Z(y),\]
by Theorem 2.1 in that reference, if $Y(0)/P\to y^{(0)}$ as $P\to\infty$ then $Y(\cdot)/P$ converges uniformly in probability on bounded time intervals to the solution of the initial value problem $y(0)=y^{(0)}$ and $dy/dt = F(y)$. In the present context the ODEs take the following form. Let
\begin{align}\label{eq:rAy}
    r_A(y)=\CM + p_A(y)+0.5p_U(y) \ \ \text{and} \ \ r_B(y) = p_B(y)+0.5p_U(y)
\end{align}
and let $\sigma^{-1}(x_1,\dots,x_M) = \{(x_2,\dots,x_M,O)\colon O \in \{A,B\}\}$. Then
\begin{align}\label{eq:ODEs}
dy_x/dt = -y_x + r_{x_1}(y)\sum_{w \in \sigma^{-1}(x)}y_w
\end{align}
Equation \eqref{eq:ODEs} can be deduced from the above formulas, but the computation is unintuitive. Instead, we shall explain it via inflows and outflows. The loss term $-y_x$ occurs since each individual updates their state at rate $1$, while the other term occurs since each individual with state $(x_2,\dots,x_M,\cdot)$ updates to state $(x_1,\dots,x_M)$ at rate $r_{x_1}(y)$.\\

Recall the population size is $P$, the committed minority has size $\CM P$ and the ODEs are obtained after rescaling by $1/P$. So, the relevant solutions of the ODEs have uncommitted population size $\sum_{x \in S}y_x=1-\CM$. Before moving on we note the following observation that we will need later on.
\begin{lemma}\label{lem:ratesum}
If $\sum_{x \in S}y_x=1-\CM$ then $r_B(y)=1-r_A(y)$.
\end{lemma}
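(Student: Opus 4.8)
The plan is to verify the identity by directly summing the two rate functions and showing that $r_A(y) + r_B(y) = 1$. The key structural fact I would exploit first is that the three index sets appearing in the definitions \eqref{eq:pAy}, namely $\{x \colon N(x) < M/2\}$, $\{x \colon N(x) = M/2\}$, and $\{x \colon N(x) > M/2\}$, form a partition of the full state space $S$: by trichotomy, every opinion state $x$ satisfies exactly one of the three comparisons between the integer $N(x)$ and $M/2$. Summing $y_x$ over each block therefore gives
\[
p_A(y) + p_U(y) + p_B(y) = \sum_{x \in S} y_x,
\]
and the hypothesis $\sum_{x \in S} y_x = 1 - \CM$ turns this into $p_A(y) + p_U(y) + p_B(y) = 1 - \CM$.

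With this partition identity in hand, I would add the definitions \eqref{eq:rAy} of $r_A$ and $r_B$. The two undecided contributions $0.5\,p_U(y)$ combine into a single $p_U(y)$, so that
\[
r_A(y) + r_B(y) = \CM + \bigl(p_A(y) + p_U(y) + p_B(y)\bigr).
\]
Substituting the partition identity gives $r_A(y) + r_B(y) = \CM + (1 - \CM) = 1$, and rearranging yields the claim $r_B(y) = 1 - r_A(y)$.

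There is essentially no hard step here; the argument is a one-line cancellation once the partition is noted. The only point deserving care is confirming that the three conditions on $N(x)$ are genuinely exhaustive and disjoint, which I would check separately for the two parities of $M$: when $M$ is odd the middle block is empty, since no integer equals $M/2$, so $p_U(y) = 0$ and the identity reduces correctly, while when $M$ is even the middle block accounts for the undecided agents and the two halves $0.5\,p_U(y)$ are exactly what make the cancellation work. In either case the conclusion follows without further assumptions.
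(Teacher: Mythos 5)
Your proposal is correct and follows exactly the paper's own argument: add the definitions of $r_A$ and $r_B$ from \eqref{eq:rAy}, observe that $p_A(y)+p_U(y)+p_B(y)=\sum_{x\in S}y_x$ by \eqref{eq:pAy}, and substitute the hypothesis to get $r_A(y)+r_B(y)=\CM+(1-\CM)=1$. The extra parity check on $N(x)$ versus $M/2$ is careful but adds nothing beyond what the paper's one-line proof already relies on.
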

\begin{proof}
    From \eqref{eq:rAy}, $r_A(y)+r_B(y) = \CM + p_A(y)+p_B(y)+p_U(y)$ and from \eqref{eq:pAy}, $p_A(y)+p_U(y)+p_B(y)=\sum_{x \in S}y_x$. If the latter is equal to $1-\CM$ then $r_A(y)+r_B(y)=\CM + 1-\CM=1$.
\end{proof}
 
To provide some context we explicitly give the functions $r_A$ and $r_B$ for $M\in\{1,2,3\}$ as well as for a modified version of the $M=2$ case, in which the uncertain opinion states $\{AB,BA\}$ are treated as a single group. The modified system is introduced to contrast its dynamics with the unmodified $M=2$ case.

\subsubsection{$M=1$ case} \label{Sect:ODE_Methods_M=1}
In the case where individuals hold one memory, there are two possible states for the uncommitted population. We call the corresponding compartments $A$ and $B$, where each member of the population holds, respectfully, opinion A or opinion B. When $M=1$, an individual's memory bank and opinion are the same. Individuals only change their opinion when they hear the opposite opinion. That is, the transition rate from $A$ to $B$ is $y_A+\mathcal{C_M}$ and the transition rate from $B$ to $A$ is $y_B$. The movement between groups is shown in Figure  \ref{fig:M=1_Comp}. The system of ODEs is:
\begin{subequations}
\begin{align} \label{eq:Model_M=1}
	\frac{dy_A}{dt} &= y_B(y_A+\mathcal{C_M}) - y_Ay_B = y_B\CM, \\
    \frac{dy_B}{dt} &= y_Ay_B - y_B(y_A+\mathcal{C_M}) = -y_B\CM.
\end{align}
\end{subequations}
There is some cancellation of terms, namely those corresponding to interactions between uncommitted individuals with opinion $A$ and those with opinion $B$. The only surviving term in the ODEs is due to the presence of a committed minority.
\begin{figure}[H]
	\centering
	\includegraphics[width=0.5\linewidth]{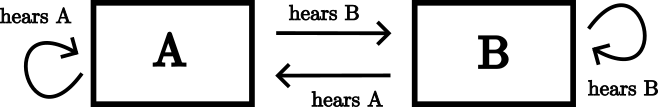}
	\caption{Compartmental diagram for the uncommitted compartments of the $M=1$ ODE model. The compartments that speak opinion $A$ are $A$ and $\mathcal{C_M}$ (not shown) and the compartment that speaks opinion $B$ is $B$.}
	\label{fig:M=1_Comp}
\end{figure}

\subsubsection{$M=2$ case} \label{Sect:ODE_Methods_M=2}
In the $M=2$ case of the model, there are four compartments ($AA$, $AB$, $BA$, and $BB$) and thus there are four equations. We note that compartments $AB$ and $BA$ are undecided and speak each opinion equally often. In this case the speaking rate of opinion $A$ (i.e., the ``hears $A$'' transition rate) is $r_A(y)=y_{AA}+\tfrac{1}{2}y_{AB}+\tfrac{1}{2}y_{BA}+\mathcal{C_M}$ and the speaking rate of opinion $B$ (i.e., the ``hears $B$'' transition rate) is $r_B(y)=y_{BB}+\tfrac{1}{2}y_{AB}+\tfrac{1}{2}y_{BA}$. 

We provide a compartmental diagram in Figure \ref{fig:M=2_Comp} that depicts the transitions between compartments. We describe other possible choices for the speaking behaviour of undecided individuals in Section \ref{Sect:Discussion}. The case where these undecided groups are combined is discussed in Section \ref{Sect:ODE_Methods_M=2Combined}.

\begin{figure}[H]
	\centering
	\includegraphics[width=0.37\linewidth]{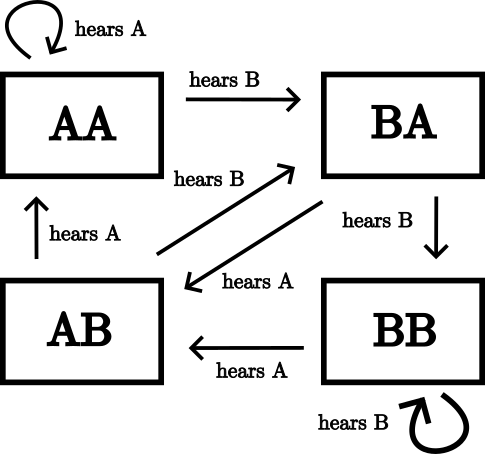}
	\caption{Compartmental diagram for the uncommitted compartments of the $M=2$ ODE model. The compartments that speak only opinion $A$ are $AA$ and $\mathcal{C_M}$ (not shown). The compartment that speaks only opinion $B$ is $BB$. Compartments $AB$ and $BA$ speak both opinions equally often.}
	\label{fig:M=2_Comp}
\end{figure}

\subsubsection{$M=2$ case with combined undecided compartments} \label{Sect:ODE_Methods_M=2Combined}
In this variation, which is included here simply to contrast with the usual $M=2$ case, we assume that the $AB$ and $BA$ populations have the same behaviour and combine types $AB$ and $BA$ into an ``undecided'' type, $U$. Moreover, we assume that $U$ switches to whatever it hears next, i.e., to $AA$ if it hears $A$, and to $BB$ if it hears $B$. This process differs from that used in the original $M=2$ model above since, upon hearing $A$, $AB$ transitions to $AA$ but $BA$ transitions to $AB$. This assumption allows us to reduce the previous $M=2$ model to the compartmental diagram given in Figure \ref{fig:M=2_simplified_Comp}. The corresponding system of ODEs is given in Appendix \ref{append:ode_eqs} (Equations \ref{sup:eq:Model_M=2_simp}). Here, the speaking rate of opinion $A$ is $r_A(y)=y_{AA}+\tfrac{1}{2}y_{U}+\mathcal{C_M}$ and the speaking rate of opinion $B$ is $r_B(y)=y_{BB}+\tfrac{1}{2}y_{U}$. The only difference between this simplified $M=2$ model and the original $M=2$ model is the loss of the flow between the $AB$ and $BA$ compartments. The transition rates among the remaining compartments do not change as a result of the simplification since $y_U$ corresponds to $y_{AB}+y_{BA}$.

\begin{figure}[H]
	\centering
	\includegraphics[width=0.8\linewidth]{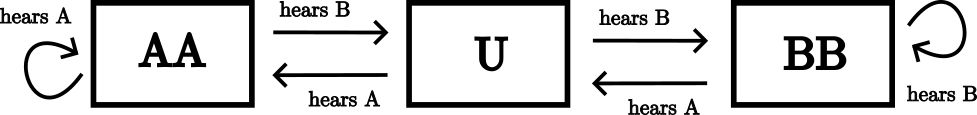}
	\caption{Compartmental diagram for the uncommitted compartments of the $M=2$ with combined undecided compartments ODE model. Compartments $AA$ and $\mathcal{C_M}$ speak only opinion $A$, compartment $BB$ speaks only opinion $B$, and compartment $U$ speaks opinions $A$ and $B$ equally often.}
	\label{fig:M=2_simplified_Comp}
\end{figure}

\subsubsection{$M=3$ Model} \label{Sect:ODE_Methods_M=3}
The last ODE model we explicitly consider is the model where individuals can hold 3 memories. A compartmental diagram to show the flow between compartments is given in Figure \ref{fig:M=3_Comp}. In this case $r_A(y)=y_{AAA}+y_{AAB}+y_{ABA}+y_{BAA}+\mathcal{C_M}$ and $r_B(y)=y_{BBB}+y_{BBA}+y_{BAB}+y_{ABB}$.

\begin{figure}[H]
	\centering
	\includegraphics[width=0.85\linewidth]{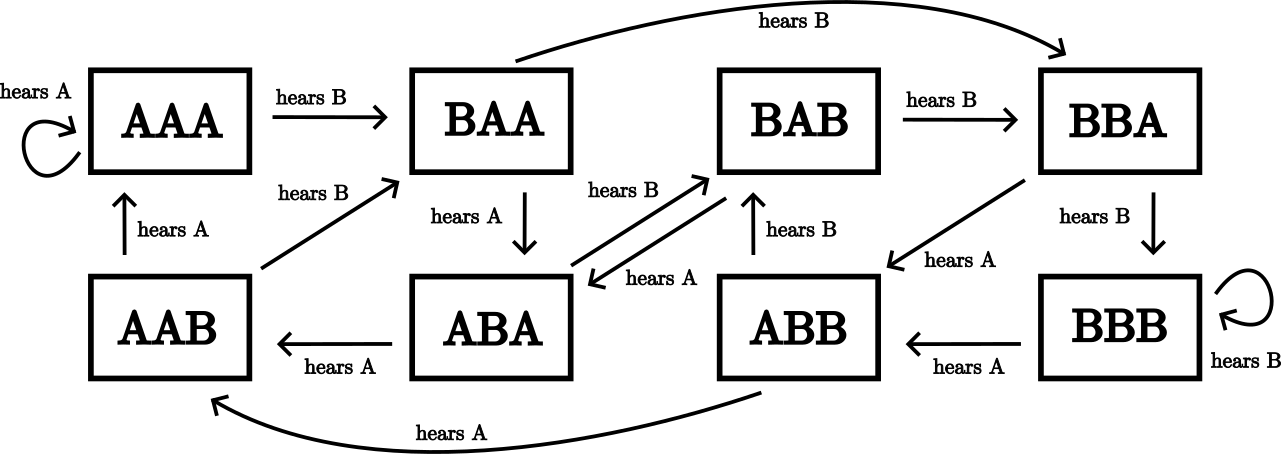}
	\caption{Compartmental diagram for the uncommitted compartments of the $M=3$ ODE model. Compartments $AAA, AAB, ABA, BAA,$ and $\mathcal{C_M}$ speak opinion $A$ and compartments $BBB, BBA, BAB,$ and $ABB$ speak opinion $B$.}
	\label{fig:M=3_Comp}
\end{figure}

The compartment diagram becomes unwieldy for large $M$ since there are $2^M$ opinion states. To study the equilibria of the system for arbitrary $M$, we use opinion response functions, discussed in the next section.

\subsection{Opinion Response Functions} \label{Sect:OpResp_Methods}
In this section we characterize the equilibria of the ODE model for any $M$, using a tool we created called the opinion response function (ORF); the ORF for the modified $M=2$ case will be derived in Appendix \ref{append:sig_proof}. We begin with the case $\mathcal{C_M}=0$, i.e., there is no committed minority. To define the ORF for any of the ODE models, we view the opinion dynamics system as a decoupled speaking-listening process, i.e., each interaction is interpreted as a speaking event followed by a separate listening event. When the system is well-mixed and at equilibrium, the speaking rate and the listening rate of each opinion are equal and roughly constant throughout the population and across time. In other words, at equilibrium, the average per-capita rate at which opinion $A$ or $B$ is spoken is, respectively, the same as the rate at which $A$ or $B$ is heard. Overloading notation somewhat, we denote these (constant) listening rates by $r_A,r_B$, satisfying $r_A+r_B=1$. To define the speaking rate, let $X \in S$ denote the opinion state of a fixed individual and recall the opinion update function $\sigma_O$ from \eqref{eq:sigx}. If the listening rate is held constant then $X$ is a Markov chain with transition rates
\begin{align}\label{eq:MC}
x \to \sigma_O(x) \ \ \text{at rate} \ \ r_O \ \ \text{for each} \ \ O \in \{A,B\}.
\end{align}
If $\min(r_A,r_B)>0$, this Markov chain is irreducible: to go from any state to $(x_1,\dots,x_m)$, apply, in order, the transitions $\sigma_{x_m},\dots,\sigma_{x_1}$. The total outgoing rate from a state is $r_A+r_B=1$ so the forward Kolmogorov equations for the transition probabilities are
\begin{align}\label{eq:fwdklmgrv}
\frac{d}{dt}p_t(x,w) = -p_t(x,w) + r_{w_1}\sum_{z \in \sigma^{-1}(w)} p_t(x,z)
\end{align}
where $\sigma^{-1}$ is defined above \eqref{eq:ODEs}. The unique stationary distribution is the product measure $\pi(x) = \prod_{i=1}^m r_{x_i}$. Let $s(x)$ denote the probability that an individual in state $x$ speaks opinion $A$, so that
\begin{align}\label{eq:sx}
    s(x) = \begin{cases}
    0 & \text{if} \ N(x)<M/2, \\
    1/2 & \text{if} \ N(x)=M/2, \\
    1 & \text{if} \ N(x)>M/2.
\end{cases}
\end{align}
The speaking rate of $A$ at equilibrium is then
$\sum_{x \in S} \pi(x) s(x)$. To encode the dependence on listening rate, let $r=r_A$, then $r_B=1-r$ so that the corresponding stationary distribution $\pi_r$ is given by
\begin{align}\label{eq:prodmeas}
\pi_r(x):=r^{N(x)}(1-r)^{M-N(x)}.
\end{align}
Define the corresponding opinion response function $\Phi$ by
\begin{equation}\label{eq:Phi}
    \Phi(r) := \sum_{x \in S} \pi_r(x)s(x).
\end{equation}
That is, $\Phi(r)$ is the probability that a randomly chosen individual speaks $A$, or equivalently the average per-capita speaking rate of $A$, when the per-capita listening rate is $r$, the population is at equilibrium, and there is no committed minority. It can also be written as follows. Let $X(r)$ be a random variable with distribution $\pi$ and let $\1$ denote the indicator function, i.e., $\1(\cdot)=1$ if $\cdot$ is true and $=0$ is $\cdot$ is false. Then $(\mathbf{1}(X_i(r)=A):i=1,\ldots,M)$ are independent Bernoulli$(r)$ trials so $N(X(r))$ is Binomial$(M,r)$ and
\begin{equation}\label{eq:Phi2}
    \Phi(r) = \P(N(X(r))>M/2)+\frac{1}{2}\P(N(X(r))= M/2).
\end{equation}
Let us now compute the ORF when there is a committed minority of size $\CM$ that speaks only $A$. In this case the (rescaled by $1/P$) uncommitted population size is $1-\CM$. If the per-capita listening rate of $A$ is $r$ then the average per-capita speaking rate of $A$ is equal to $\CM + (1-\CM)\Phi(r)$, since the committed population speaks $A$ at per-capita rate $1$ and the uncommitted population speaks $A$ at per-capita rate $\Phi(r)$. This is the more general ORF, that we denote by $\Psi_{\CM}$:
\begin{align}\label{eq:Psi}
    \Psi_{\CM}(r):= \CM + (1-\CM)\Phi(r).
\end{align}
\begin{definition}
    Let $r\in [0,1]$. Then $r$ is a fixed point of $\Psi_{\CM}$ if $r=\Psi_{\CM}(r)$.
\end{definition}

Equilibrium points of \eqref{eq:ODEs} are characterized by the property that listening rate equals speaking rate, i.e., by the fixed point equation $r=\Psi_{\CM}(r)$, in the following sense.

\begin{theorem}\label{thm:ODEeq}
Let $\CM\in [0,1)$. If $y\in \R_+^S$ and $\sum_{x \in S}y_x=1-\CM$ then $y$ is an equilibrium point of \eqref{eq:ODEs} if and only if $y=(1-\CM)\pi_r$ for some $r\in [0,1]$ that satisfies $r=\Psi_{\CM}(r)$.
\end{theorem}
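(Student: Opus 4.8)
The plan is to prove both directions of the equivalence by exploiting the product-measure structure of the candidate equilibria together with Lemma~\ref{lem:ratesum}. First I would observe that the constraint $\sum_{x\in S}y_x=1-\CM$ is preserved, and that any equilibrium $y$ with this total mass has a well-defined listening rate $r:=r_A(y)$, which by Lemma~\ref{lem:ratesum} satisfies $r_B(y)=1-r$. The strategy is to show that $y$ is an equilibrium of \eqref{eq:ODEs} precisely when $y$ is proportional to the stationary measure $\pi_r$ of the auxiliary Markov chain \eqref{eq:MC} at that same $r$, and that consistency of $r$ forces the fixed-point equation $r=\Psi_{\CM}(r)$.

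For the ``if'' direction, I would substitute $y=(1-\CM)\pi_r$ into the right-hand side of \eqref{eq:ODEs}. The key computation is that $r_{x_1}(y)=r\cdot\1(x_1=A)+(1-r)\cdot\1(x_1=B)$ need not hold directly; rather, I must verify that $r_A(y)=r$ when $y=(1-\CM)\pi_r$. Here I would use \eqref{eq:Phi2}: since $p_A(y)+\tfrac12 p_U(y)$ summed over the uncommitted population equals $(1-\CM)$ times the probability that a $\pi_r$-distributed state speaks $A$, we get $r_A(y)=\CM+(1-\CM)\Phi(r)=\Psi_{\CM}(r)$. Thus demanding $r_A(y)=r$ is exactly the fixed-point condition $r=\Psi_{\CM}(r)$. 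Given that, the vector field \eqref{eq:ODEs} becomes, up to the factor $(1-\CM)$, identical in form to the stationarity condition $0=-\pi_r(x)+r_{x_1}\sum_{w\in\sigma^{-1}(x)}\pi_r(w)$ for the forward Kolmogorov equation \eqref{eq:fwdklmgrv}, which holds because $\pi_r$ is the stated stationary distribution. This yields $F(y)=0$.

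For the ``only if'' direction, I would start from an arbitrary equilibrium $y\in\R_+^S$ with $\sum_x y_x=1-\CM$, set $r:=r_A(y)$ (so $r_B(y)=1-r$ by Lemma~\ref{lem:ratesum}), and note that the equilibrium equations $y_x=r_{x_1}(y)\sum_{w\in\sigma^{-1}(x)}y_w$ are, for fixed values $r_A(y)=r$ and $r_B(y)=1-r$, exactly the stationarity equations of the irreducible chain \eqref{eq:MC}. Since $\min(r,1-r)>0$ is needed for irreducibility, I would handle the boundary cases $r\in\{0,1\}$ separately: if $r=0$ then $r_A(y)=\CM+p_A(y)+\tfrac12 p_U(y)=0$ forces $\CM=0$ and concentrates $y$ on the all-$B$ state, matching $\pi_0$, and symmetrically for $r=1$. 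For $0<r<1$, uniqueness of the stationary distribution of the irreducible chain forces $y/(1-\CM)=\pi_r$, and then $r=r_A(y)=\Psi_{\CM}(r)$ as computed above closes the argument.

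The main obstacle I anticipate is the self-consistency bookkeeping: the quantity $r$ appears both as the parameter defining $\pi_r$ and as the output $r_A(y)$ of the rate function evaluated at $y$, so the argument must carefully establish that these two roles coincide rather than assume it. Concretely, the crux is verifying the identity $r_A\bigl((1-\CM)\pi_r\bigr)=\Psi_{\CM}(r)$ via \eqref{eq:Phi2}, and recognizing that the equilibrium system decouples into (i) the linear stationarity equations parametrized by the scalar $r_A(y)$ and (ii) the scalar self-consistency equation $r=\Psi_{\CM}(r)$. The irreducibility hypothesis $\min(r_A,r_B)>0$ and the separate treatment of the degenerate endpoints is the one place where care is genuinely required; everything else reduces to matching \eqref{eq:ODEs} against the forward Kolmogorov equation \eqref{eq:fwdklmgrv} whose stationary solution is already given.
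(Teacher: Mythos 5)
Your proposal is correct and follows essentially the same route as the paper's proof: identify the equilibrium equations of \eqref{eq:ODEs} with the stationarity equations of the Markov chain \eqref{eq:MC} via \eqref{eq:fwdklmgrv}, invoke uniqueness of the product-form stationary distribution $\pi_r$, and close the loop with the self-consistency computation $r_A\bigl((1-\CM)\pi_r\bigr)=\Psi_{\CM}(r)$. Your explicit treatment of the endpoints $r\in\{0,1\}$, where irreducibility of \eqref{eq:MC} fails, is a small point of extra care that the paper's proof passes over silently; otherwise the two arguments coincide.
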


\begin{proof}
Suppose $y$ is an equilibrium point of \eqref{eq:ODEs} that has $\sum_x y_x=1-\CM$, and let $r_A=r_A(y)$ and $r_B=r_B(y)$. By Lemma \ref{lem:ratesum}, $r_B=1-r_A$. Setting $p_t(x,w)=y_w$, the right-hand sides of \eqref{eq:ODEs} and \eqref{eq:fwdklmgrv} are identical, so have the same value, namely $0$. In other words, $y$ is a stationary measure of the Markov chain specified by \eqref{eq:MC}. In particular, letting $r=r_A$, $y=c\pi_r$ for some constant $c$. Since $\sum_x y_x=1-\CM$ and $\sum_x \pi_r(x)=1$, $y=(1-\CM)\pi_r$. From \eqref{eq:pAy}, $p_A(y)+0.5p_U(y)=\sum_{x \in S}y_xs(x)$, so using these and \eqref{eq:rAy},
\begin{align*}
    r=r_A(y)&=\CM + p_A(y)+0.5p_U(y)=\CM + \sum_{x \in S} y_xs(x) \\
    &= \CM + (1-\CM) \sum_{x \in S}\pi_r(x)s(x) = \Psi_{\CM}(r).
\end{align*}
We have verified the necessity of the conditions in Theorem \ref{thm:ODEeq}. If, on the other hand, $y=(1-\CM)\pi_r$ for some $r$ satisfying $r=\Psi_{\CM}(r)$, the above implies $r=r_A(y)$, so with the given choice of $y$, the right-hand side of \eqref{eq:ODEs} is equal to zero, i.e., $y$ is an equilibrium of \eqref{eq:ODEs}.
\end{proof}

For a given value of $\CM$, the form of an equilibrium point $y$ is completely determined by the value of $r$. Moreover, the map $r\mapsto \pi_r$ from values of $r$ to equilibrium points is smooth, in fact analytic, since it is a polynomial in $r$. In other words, Theorem \ref{thm:ODEeq} implies that there is a smooth, one-to-one correspondence between equilibrium points of the ODE system \eqref{eq:ODEs} and fixed points of $\Psi_{\CM}$.

For all $M\ge 3$, $\Phi$ is a sigmoid function, i.e., is bounded, increasing, and has a unique inflection point, see Theorem \ref{thm:Phisig}. $\Psi_{\CM}$ is the weighted average of $1$ and $\Phi$, and increasing $\mathcal{C_M}$ has the effect of ``pulling up'' $\Psi_{\mathcal{C_M}}$ towards $1$. Since $\Phi$ is S-shaped, $\Psi_{\CM}$ has a saddle-node bifurcation in the following sense: there is a value $\mathcal{C_M}^*>0$ such that for $\mathcal{C_M}<\mathcal{C_M}^*$, $\Psi_\mathcal{C_M}$ has three equilibria, for $\mathcal{C_M}=\mathcal{C_M}^*$ it has two equilibria, and for $\mathcal{C_M}>\mathcal{C_M}^*$ it only has the equilibrium $r=1$. This is visible in Figure \ref{fig:OpResp_Func} and is proved in Theorem \ref{thm:sd} in Appendix \ref{append:bifur_proof}.

\begin{figure}[tbph]
    \centering
    \subfloat[$\mathcal{C_M}=0$]{\includegraphics[width=0.48\linewidth]{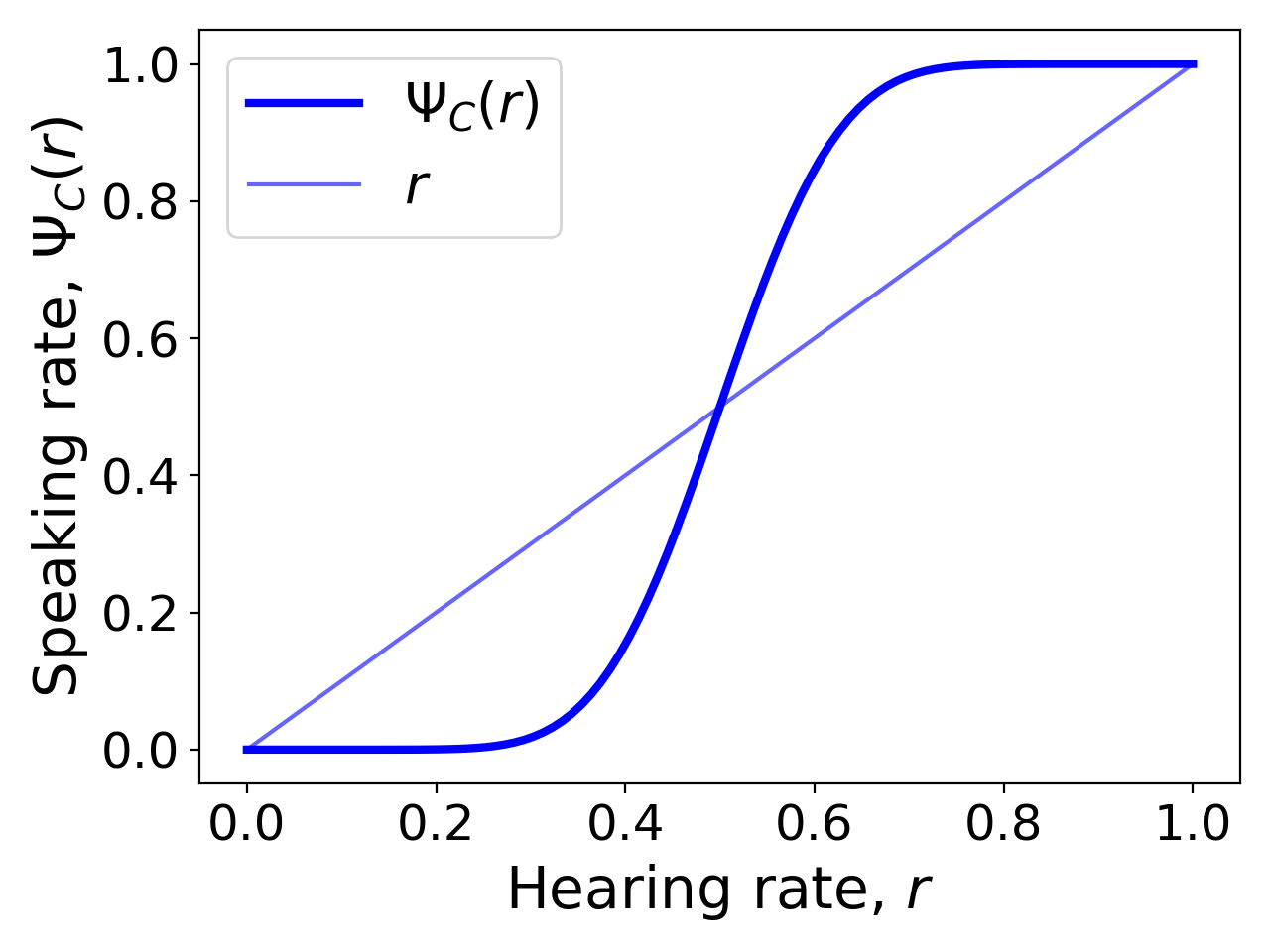}}
    \subfloat[$\mathcal{C_M}=0.31$]{\includegraphics[width=0.48\linewidth]{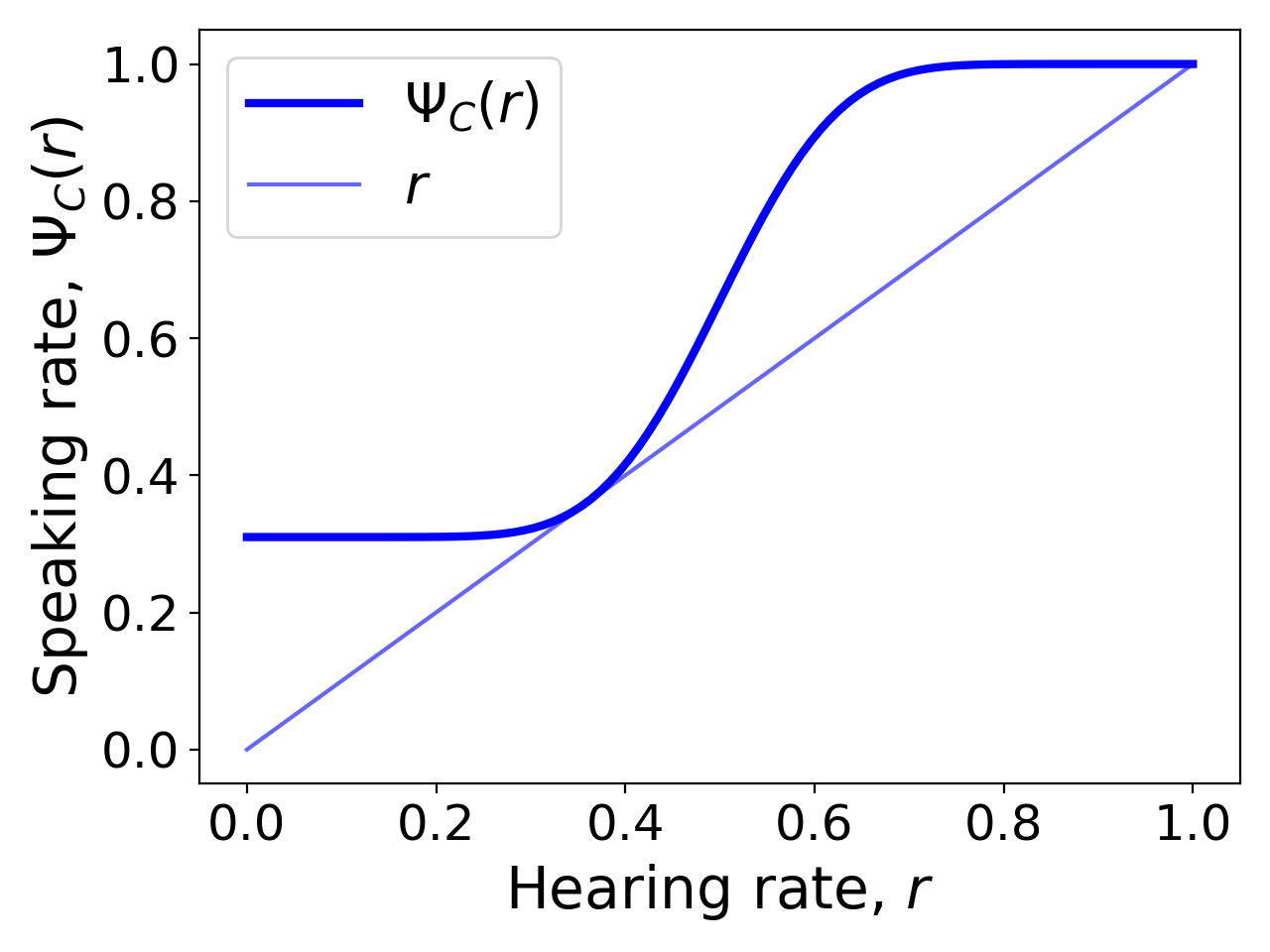}} \\
    \subfloat[$\mathcal{C_M}=0.4$]{\includegraphics[width=0.48\linewidth]{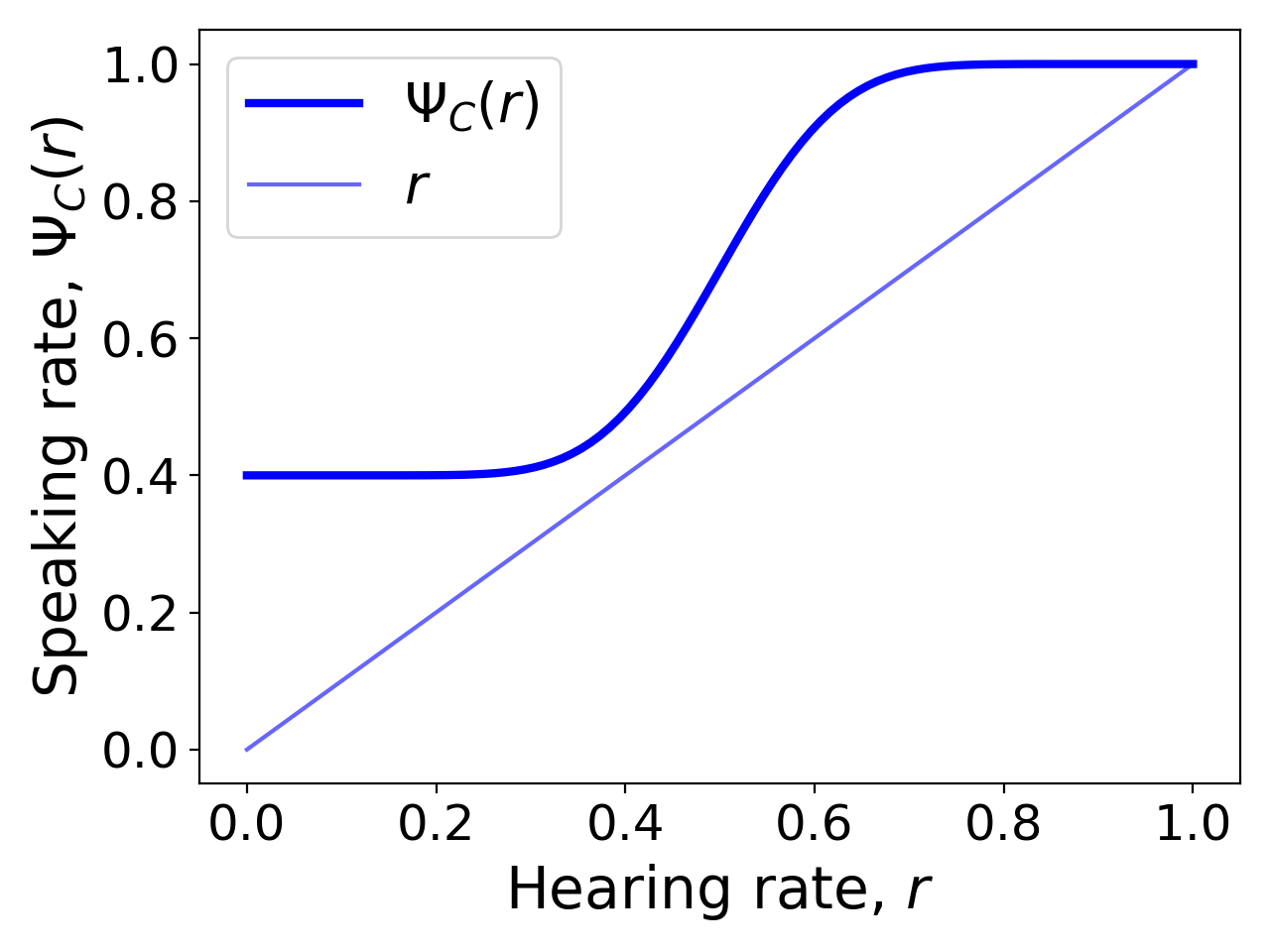}}    
    \caption{Sample opinion response functions, $\Psi_\mathcal{C_M}(r)$, for $M=25$ and the given $\mathcal{C_M}$. We note that $\Psi_\mathcal{C_M}(r) = \Phi(r)$ when $\mathcal{C_M}=0$ and the tipping point for this model occurs at $\mathcal{C_M}=0.31$. An identity line is provided to show the fixed points of $\Psi_\mathcal{C_M}(r)$.}
    \label{fig:OpResp_Func}
\end{figure}

At a practical level, the correspondence between fixed points of $\Psi_{\CM}$ and equilibria of \eqref{eq:ODEs} means that, by solving the fixed point equation $r=\Psi_\mathcal{C_M}(r)$, we can: 1) obtain bifurcation diagrams for any $M$, 2) compute the value of $\mathcal{C_M}^*$, and 3) investigate the limiting behaviour of the system as $M\to\infty$. This is done in Section \ref{Sect:OpResp_Results}, see in particular Figure \ref{fig:OpResp_Bifur}.

\section{Results} \label{Sect:Results}
\subsection{Agent-Based Model} \label{Sect:ABM_Results}

We simulate the ABM to estimate the tipping point for various $M$ using a binary search algorithm and an error bound of $10/P$ (i.e., the tipping point is found within an accuracy of 10 agents). In order for our ABM results to match those obtained using the ORF approach, we find that the number of agents and timesteps must be sufficiently large. When there are too few agents, we obtain tipping points that occur at a committed minority size smaller than $\mathcal{C_M}^*$. On the other hand, even for $\mathcal{C_M}>\mathcal{C_M}^*$, too few time steps results in the model not having enough time to overturn the social convention. The larger $\mathcal{C_M}$ is, the faster the social convention can be overturned. Moreover, tipping only occurs for committed minority sizes larger than $\mathcal{C_M}^*$. We find that 10,000 agents and 1000 time steps results in tipping points that have an error on the order of 0.001 (10/10,000) when compared to the the tipping points predicted by the opinion response function (see Figure \ref{fig:ABM_TimeSeries} for sample time series). The relationship between memory bank length and committed minority size required to overturn the social convention is increasing and concave except for an observed ``doubling'' of threshold values, see Figure \ref{fig:ABM_TippingPoints}. In Section \ref{Sect:OpResp_Results} we address this doubling by showing that $\P$, when $\CM^*$ are computed from the ORF $\Psi_{\CM}$ given by \eqref{eq:Psi}, $\mathcal{C_{M-\text{1}}}^* = \mathcal{C_M}^*$ for even $M$. We note that opinion $B$ appears to be a passive state in the sense that when $\CM$ is large enough, no one holds opinion $B$ and everyone holds opinion $A$ by the end of the simulation. This phenomenon is a result of the difference in behaviour between the committed minority, who hold opinion $A$ and never update their opinion, and the rest of the population, who are not committed to either opinion and are prone to change their opinion depending on the opinion of their interaction partners.

\begin{figure}[H]
    \centering
    \includegraphics[width=0.48\linewidth]{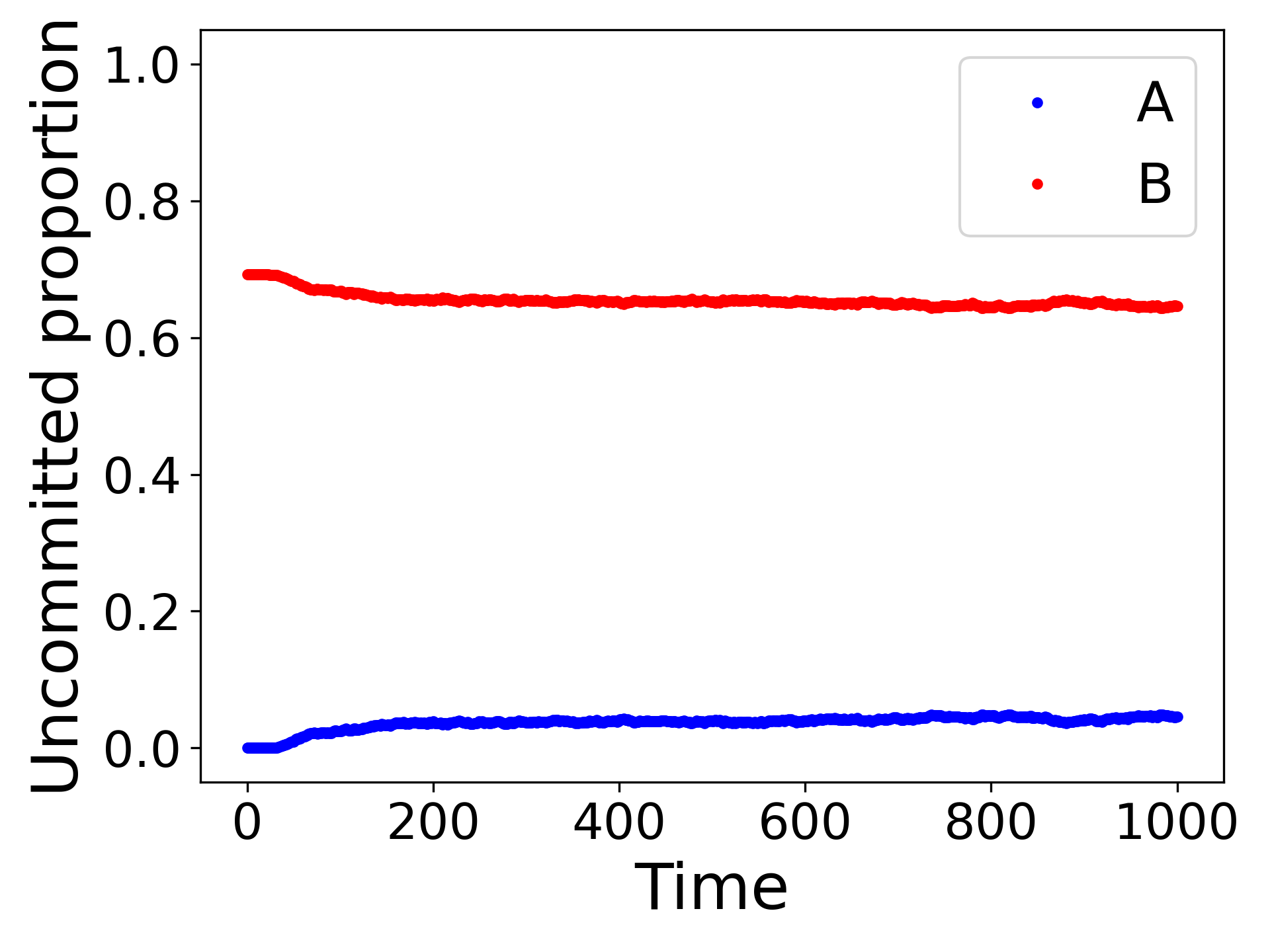}
    \includegraphics[width=0.48\linewidth]{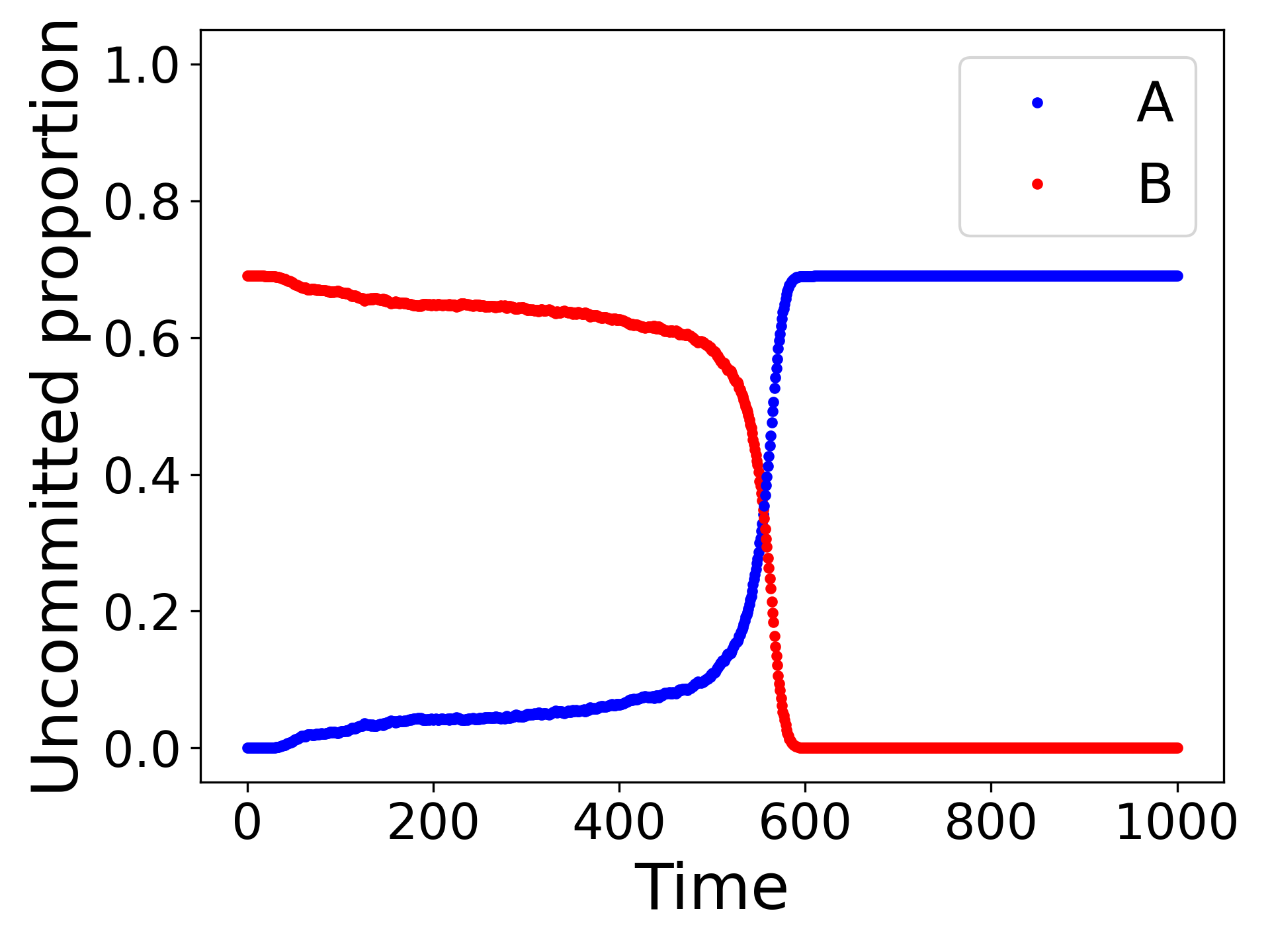}
    \caption{Sample time series for the ABM when $M=25$ and $\CM = 0.308$ (left) and $\CM = 0.31$ (right). Note that $A$ and $B$ are, respectively, the speaking rates of opinion $A$ and $B$ in the uncommitted population.}
    \label{fig:ABM_TimeSeries}
\end{figure}

\begin{figure}[H]
    \centering
    \includegraphics[width=0.6\linewidth]{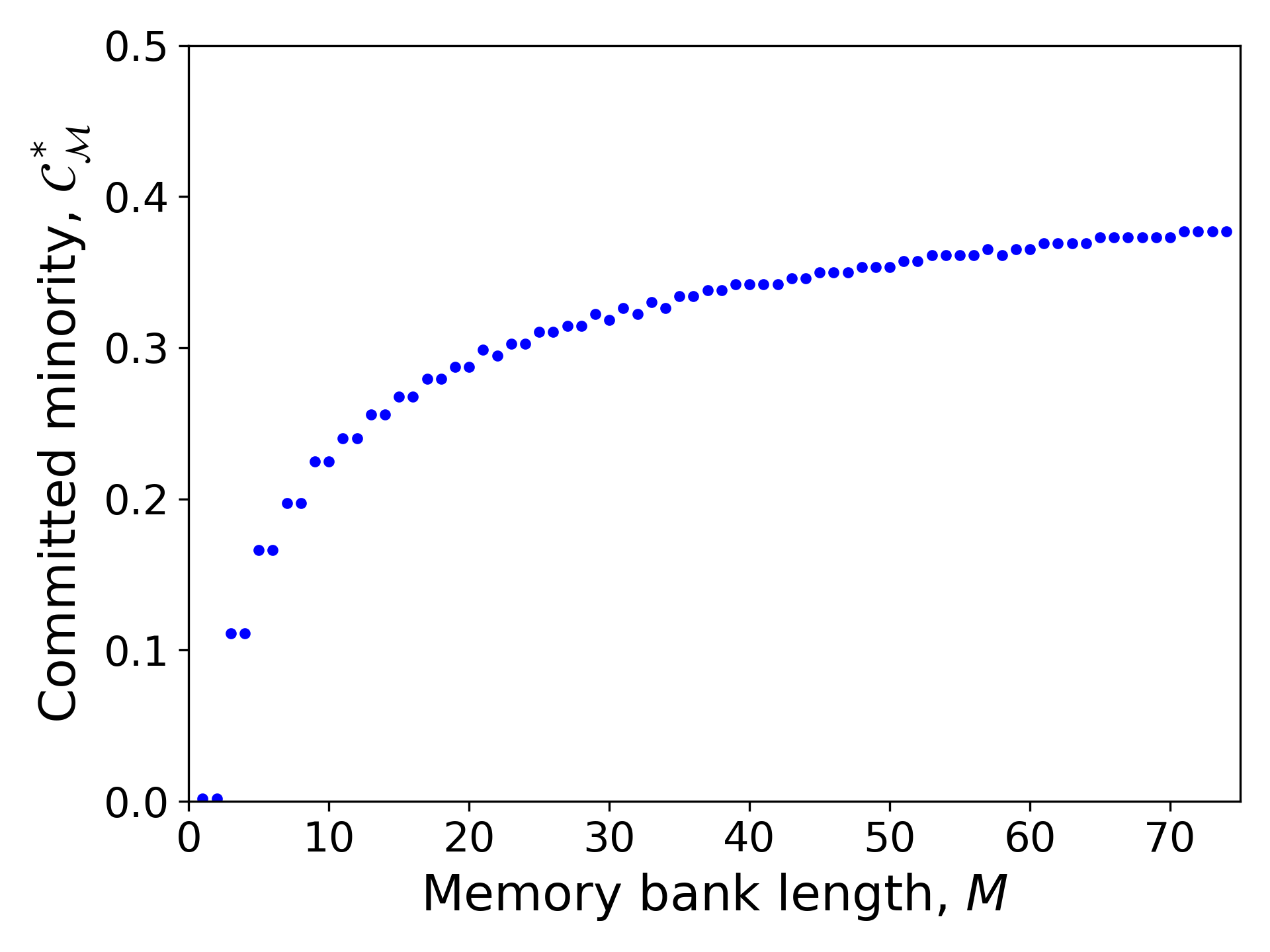}
    \caption{A plot of the estimated $\mathcal{C_M}^*$, the minimal committed minority size required to overturn a social convention, versus $M$, the memory bank length, using the ABM.}
    \label{fig:ABM_TippingPoints}
\end{figure}

\subsection{Ordinary Differential Equation Models} \label{Sect:ODE_Results}

\subsubsection{$M=1$}
We can simplify the M=1 system of equations by setting the total population to one. After applying the substitution $y_B=1-y_A-\mathcal{C_M}$ and simplifying, we obtain:
\begin{equation}
\label{eq:Model_M=1_simp}
	\frac{dy_A}{dt} = \mathcal{C_M}(1-y_A-\mathcal{C_M}).
\end{equation}
This ODE is always non-negative ($y_A$ is non-decreasing) and is zero when $\mathcal{C_M}=0$ or $1-y_A-\mathcal{C_M}=y_B=0$. When there is no committed minority, the initial state is always stable. We note that the flow rates from $A$ to $B$ and from $B$ to $A$ in Equations \ref{eq:Model_M=1} cancel each other, and there is only a net change if $\mathcal{C_M}>0$. Additionally, when $\mathcal{C_M}=0$, Equation \ref{eq:Model_M=1_simp} is also zero. For a non-zero committed minority, the system always tends towards consensus on opinion $A$ by the entire population. The general solution is: 
\begin{subequations}
\begin{align}
	y_A &= 1-\mathcal{C_M}-D\text{e}^{-\mathcal{C_M}t}, \\
	y_B &= D\text{e}^{-\mathcal{C_M}t},
\end{align}
\end{subequations}
where D is a constant of integration. The $y_B$ population is exponentially decreasing and goes to zero as $t \to \infty$. Overall, we have found that when $\mathcal{C_M}\neq0$ and for any initial conditions, the committed minority succeeds in overturning the social convention. This result means the $M=1$ ODE model only has a trivial tipping point at $\mathcal{C_M}=0$. Time series for this model (Figure \ref{fig:M=1_sim}) show convergence to consensus on $A$ when $\mathcal{C_M}>0$. The only difference between different values of $\mathcal{C_M}\in(0,1]$ is the rate of convergence to consensus. In particular, larger values of $\mathcal{C_M}$ result in faster convergence. 

\subsubsection{$M=2$}
Using the initial condition $(y_{AA},y_{AB},y_{BA},y_{BB})=(0,0,0,1-\mathcal{C_M})$ and various values of $\mathcal{C_M}\in[0,1]$, we run time series of the M=2 model. Sample time series are given in Figure \ref{fig:M=2_sim}. Similar to the M=1 ODE model, we find that when $\mathcal{C_M}=0$, there is no movement between compartments and the proportions of the uncommitted population remain the same as in the initial conditions. When $\mathcal{C_M}\in(0,1]$, $y_{BB}$ displays exponential decay. The undecided populations $AB$ and $BA$ initially increase with $AB$ having a slightly larger proportion throughout the simulation. Then, these undecided populations transition to state $AA$ which increases until $AA$ makes up the entirety of the uncommitted population. As $\mathcal{C_M}$ is increased, the rate of convergence increases and the the degree of centering decreases, i.e., the undecided populations reach a smaller maximum.

\begin{figure}[tbph]
    \centering
    \subfloat[$M=1$ model for $\mathcal{C_M}=0$ (left) and $\mathcal{C_M}=0.05$ (right).\label{fig:M=1_sim}]{\includegraphics[width=0.48\linewidth]{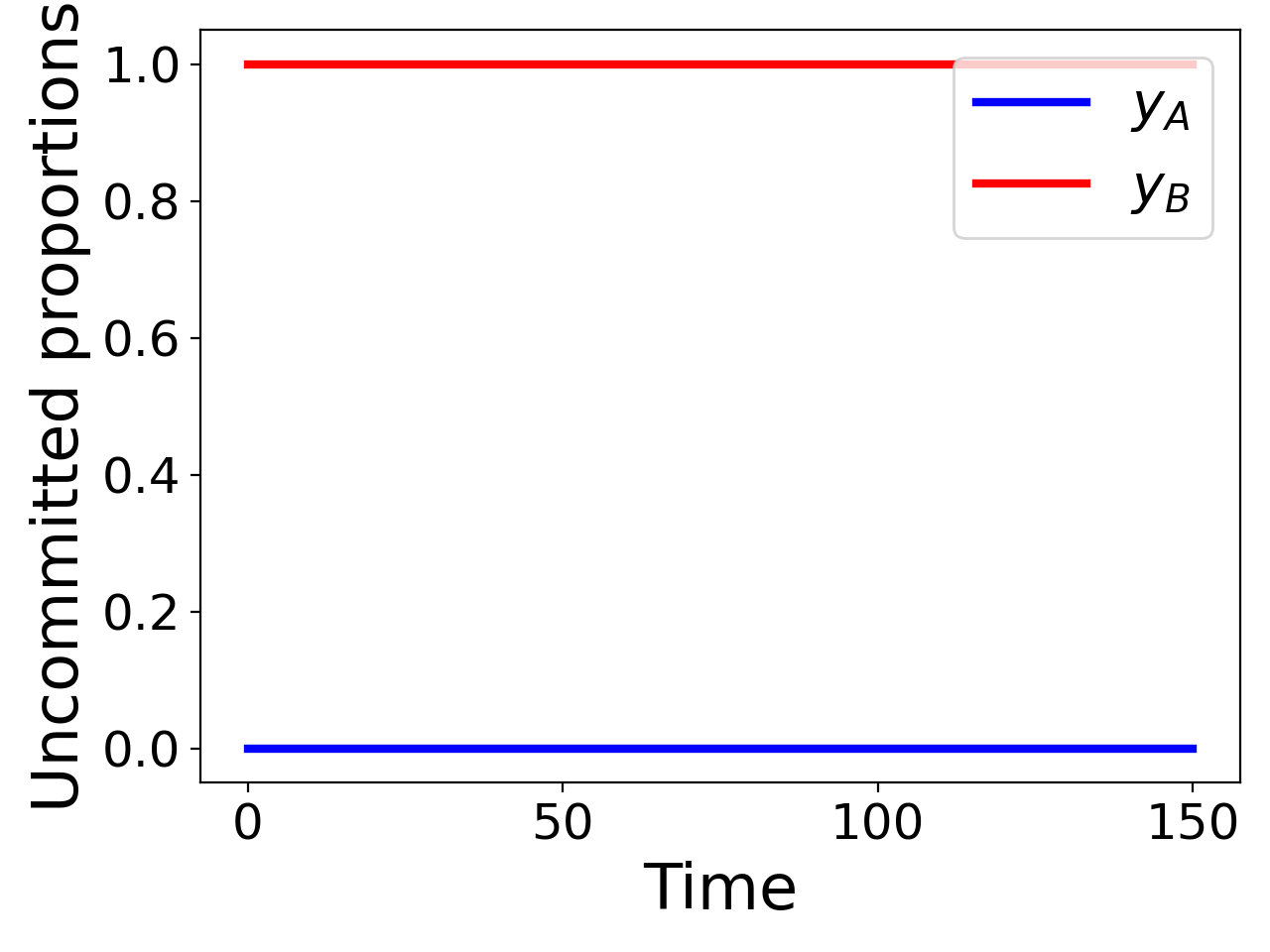} \includegraphics[width=0.48\linewidth]{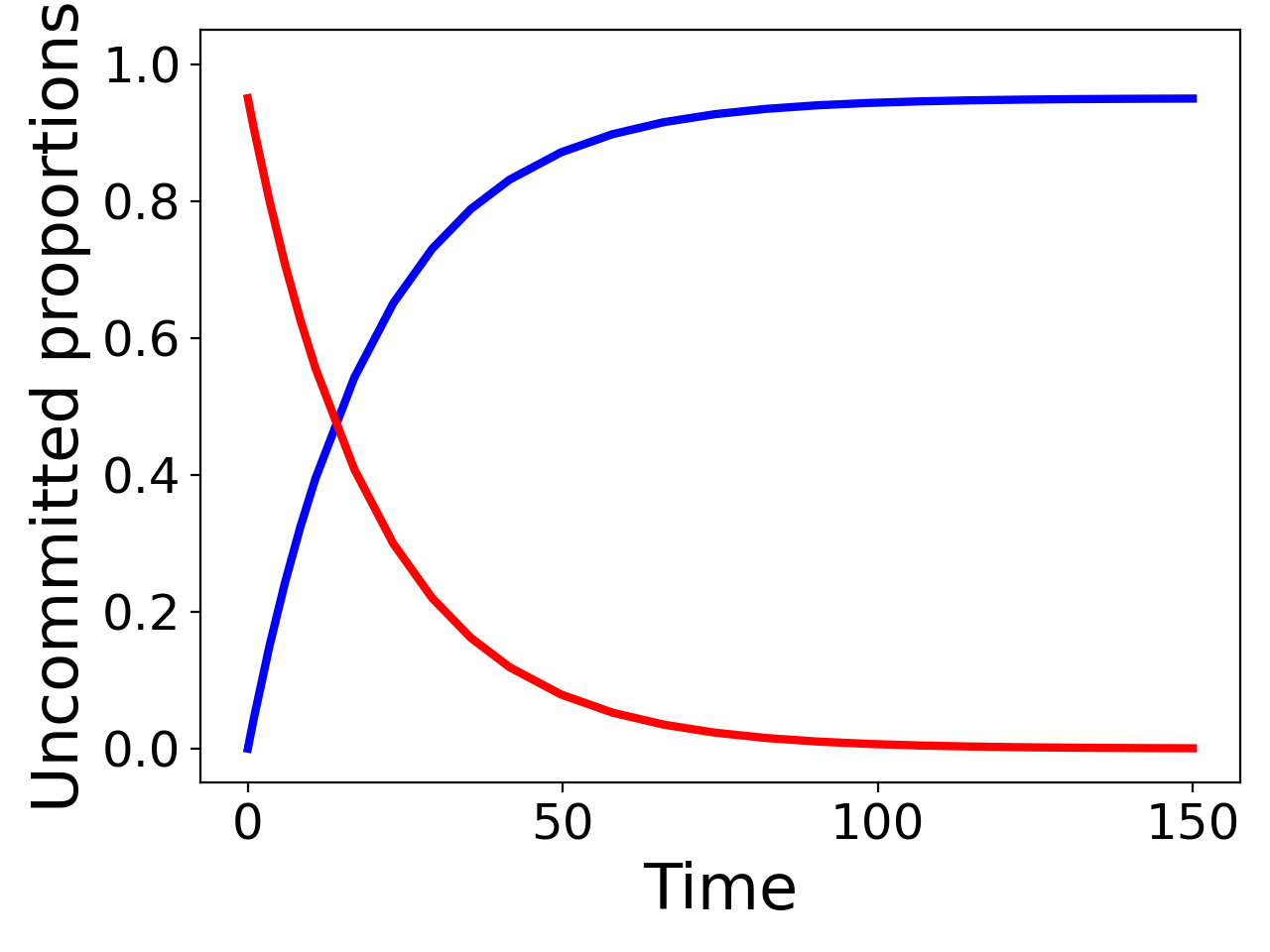}} \\

    \subfloat[$M=2$ model for $\mathcal{C_M}=0$ (left) and $\mathcal{C_M}=0.05$ (right). Note that $y_{AA}, y_{AB}$, and $y_{BA}$ all overlap when $\mathcal{C_M}=0$. \label{fig:M=2_sim}]{\includegraphics[width=0.48\linewidth]{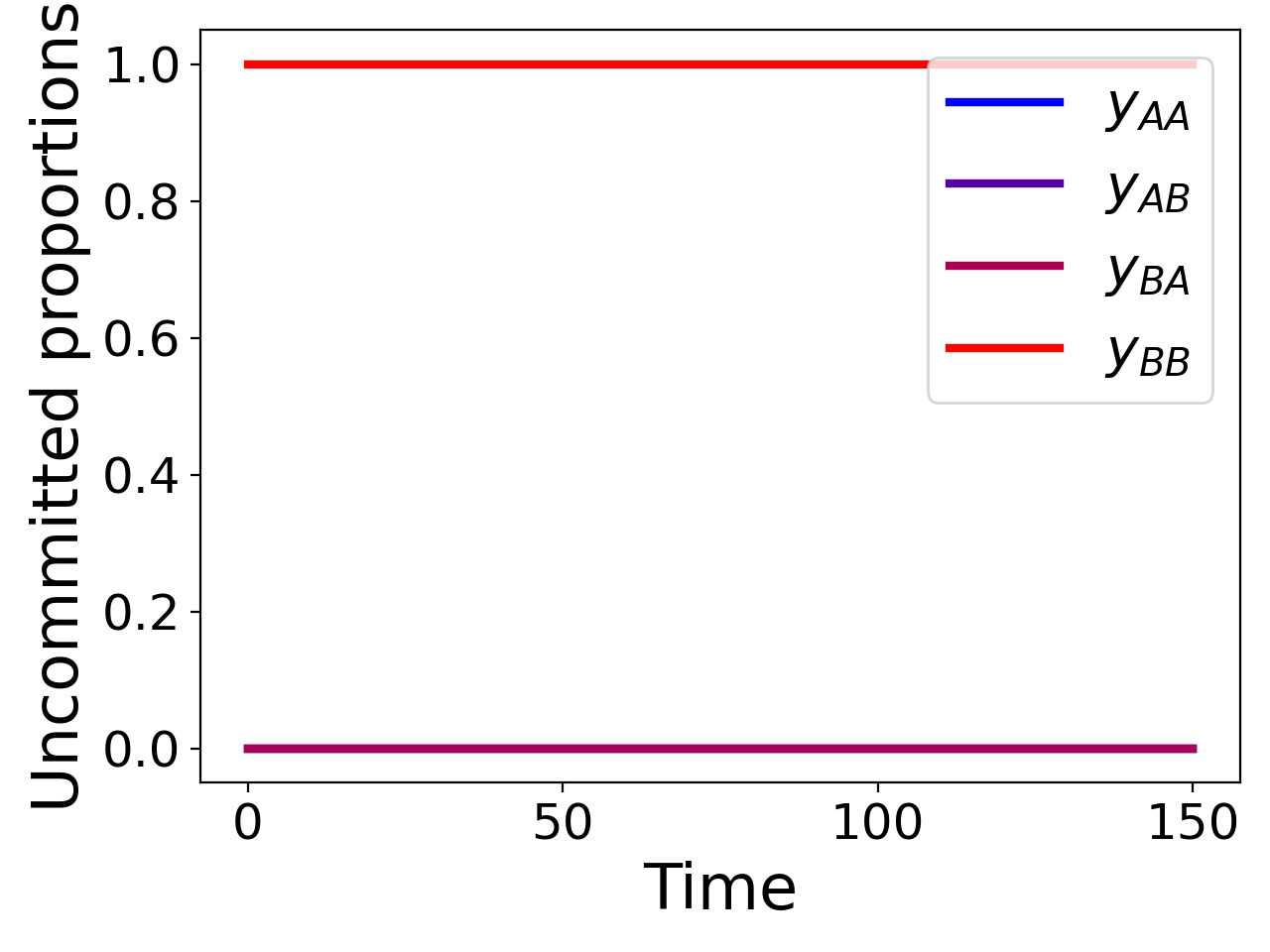} \includegraphics[width=0.48\linewidth]{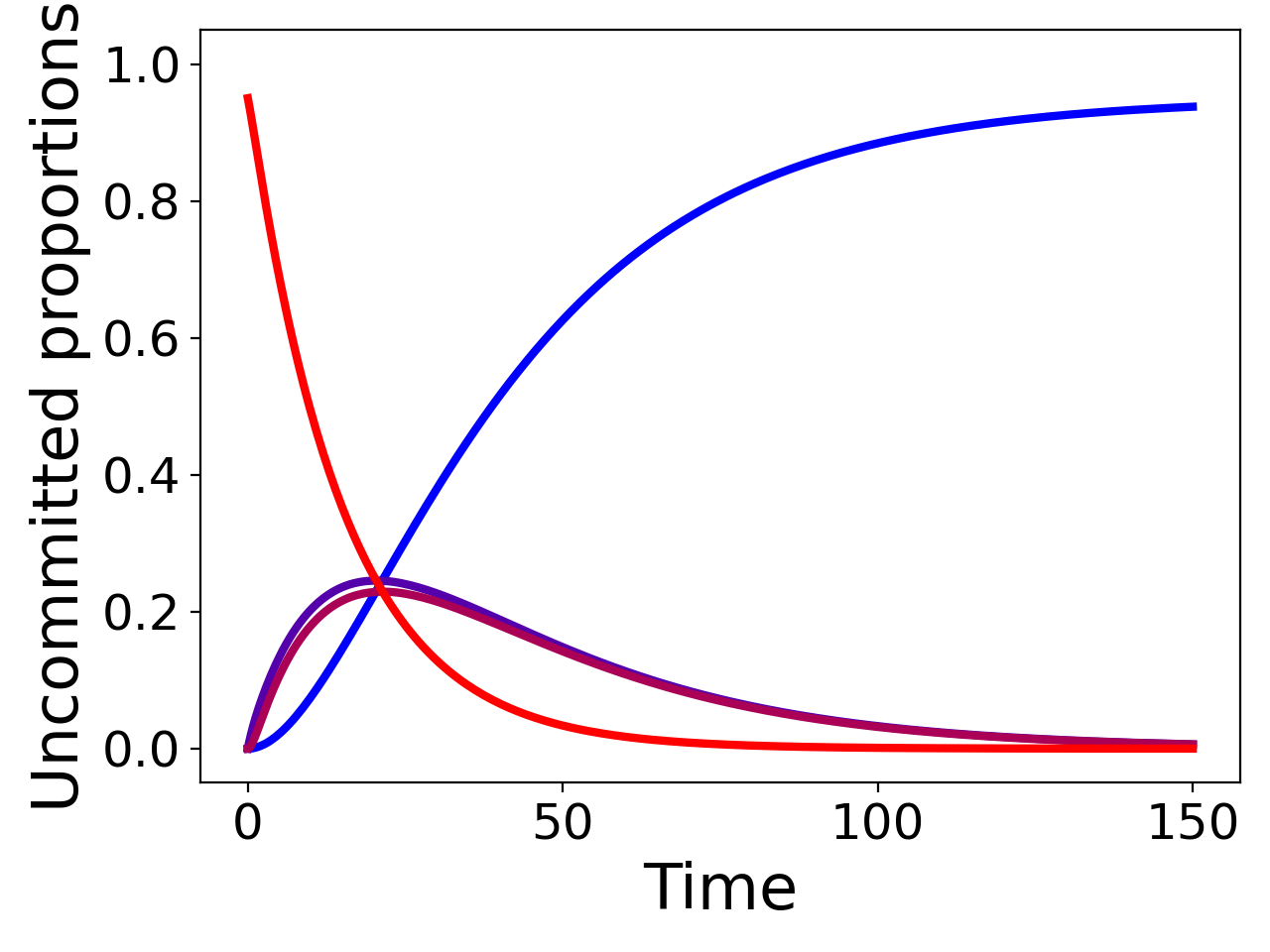}}

	\caption{Sample time series for the ODE models with a trivial tipping point. Note that the legends are the same within each row.}
\end{figure}

\subsubsection{$M=2$ with Combined Undecided Compartments}
We use the initial condition $(y_{AA},y_{U},y_{BB})=(0,0,1-\mathcal{C_M})$ to run time series for the M=2 model with combined undecided compartments. Sample time series are given in Figure \ref{fig:M=2_simplified_sim}. The results of this model differ from the previous two ODE models since this model has a non-trivial tipping point. When $\mathcal{C_M}$ is small, the social convention is not overturned and the $BB$ population remains dominant with a small increase in $y_U$ and an even smaller increase in $y_{AA}$. However, when $\mathcal{C_M}$ is large enough, there is a sharper increase in $y_U$ which increases the speaking rate of $A$ and provides enough force for the $AA$ population to eventually overturn the social convention. Similar to the previous $M=2$ case, increasing $\mathcal{C_M}$ beyond the tipping point increases the rate of convergence to consensus on $A$ and decreases centering behaviour.

\subsubsection{$M=3$}
Once again, we use the initial condition where the uncommitted population consists of only $BBB$ and all the other uncommitted compartments are empty. Sample time series are given in Figure \ref{fig:M=3_sim}. Similar to the $M=2$ combined undecided compartments model, there is a non-trivial tipping point. When $\mathcal{C_M}$ is less than the tipping point the solution trajectories go to a coexistence equilibrium, with most of the uncommitted population still holding opinion B. The committed minority is not large enough to overturn the social convention and only a small proportion of the population holds opinion $A$. On the other hand, when $\mathcal{C_M}$ is larger than the tipping point, the compartments with two $B$ memories (e.g., $ABB$) increase which causes an increase in the compartments with two $A$ memories (e.g., $ABA$) which causes the $AAA$ population to increase until all of the uncommitted population is in the $AAA$ compartment and there is a consensus on opinion $A$ throughout the population. 

Note that the overturning of the social convention appears to be slightly faster in the $M=3$ model than in the $M=2$ model with combined undecided compartments. This difference in convergence rate occurs because the value of $\mathcal{C_M}$ used for the $M=2$ model with combined undecided compartments is closer to the tipping point. In particular, values of $\mathcal{C_M}$ just above the tipping point have less force pushing the overturning of the social convention than values of $\mathcal{C_M}$ much greater than the tipping point. 

\begin{figure}[tbph]
    \centering
    \subfloat[$M=2$ with combined undecided compartments model for $\mathcal{C_M}=0.07$ (left) and $\mathcal{C_M}=0.08$ (right). \label{fig:M=2_simplified_sim}]{\includegraphics[width=0.48\linewidth]{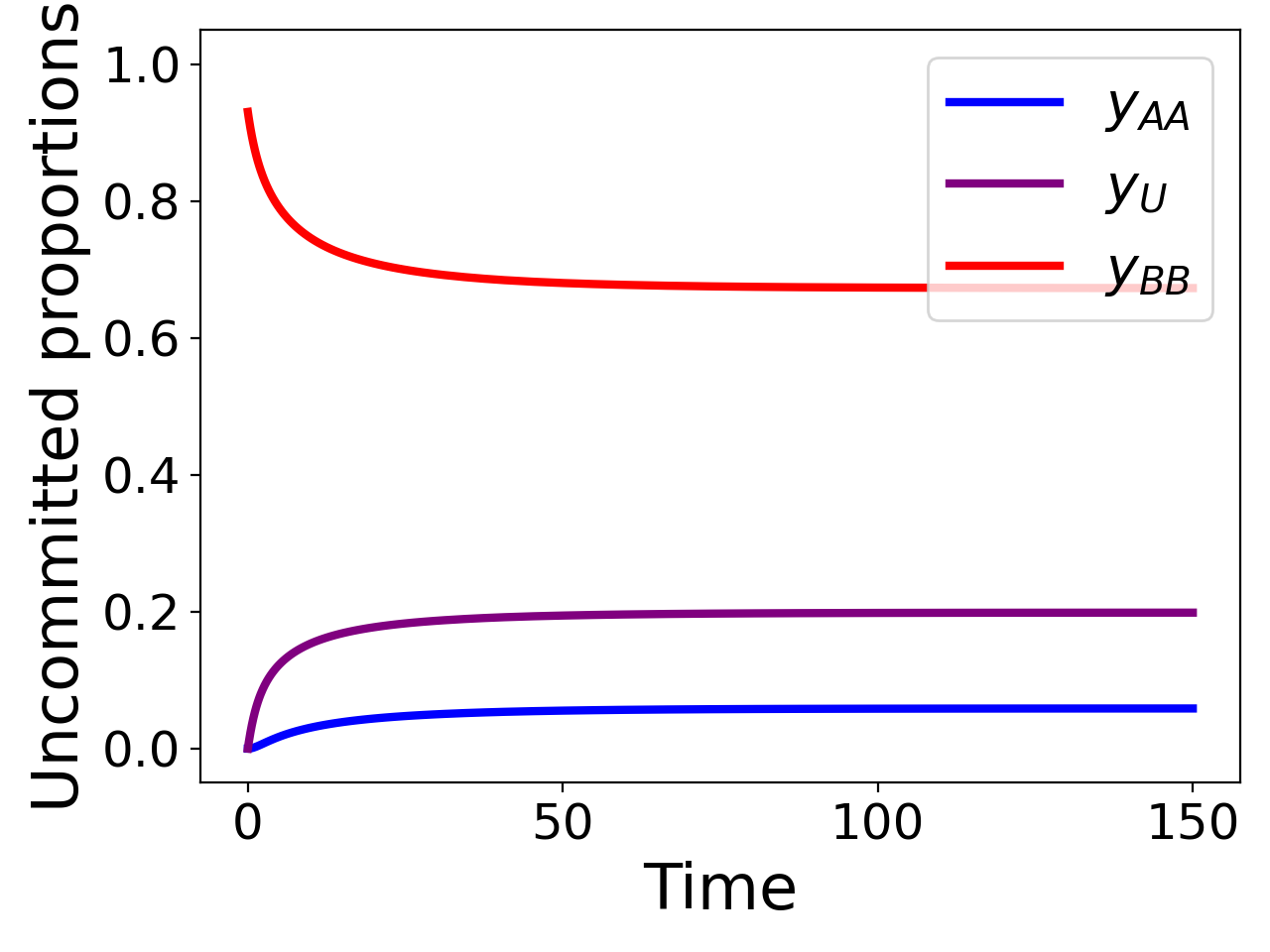} \includegraphics[width=0.48\linewidth]{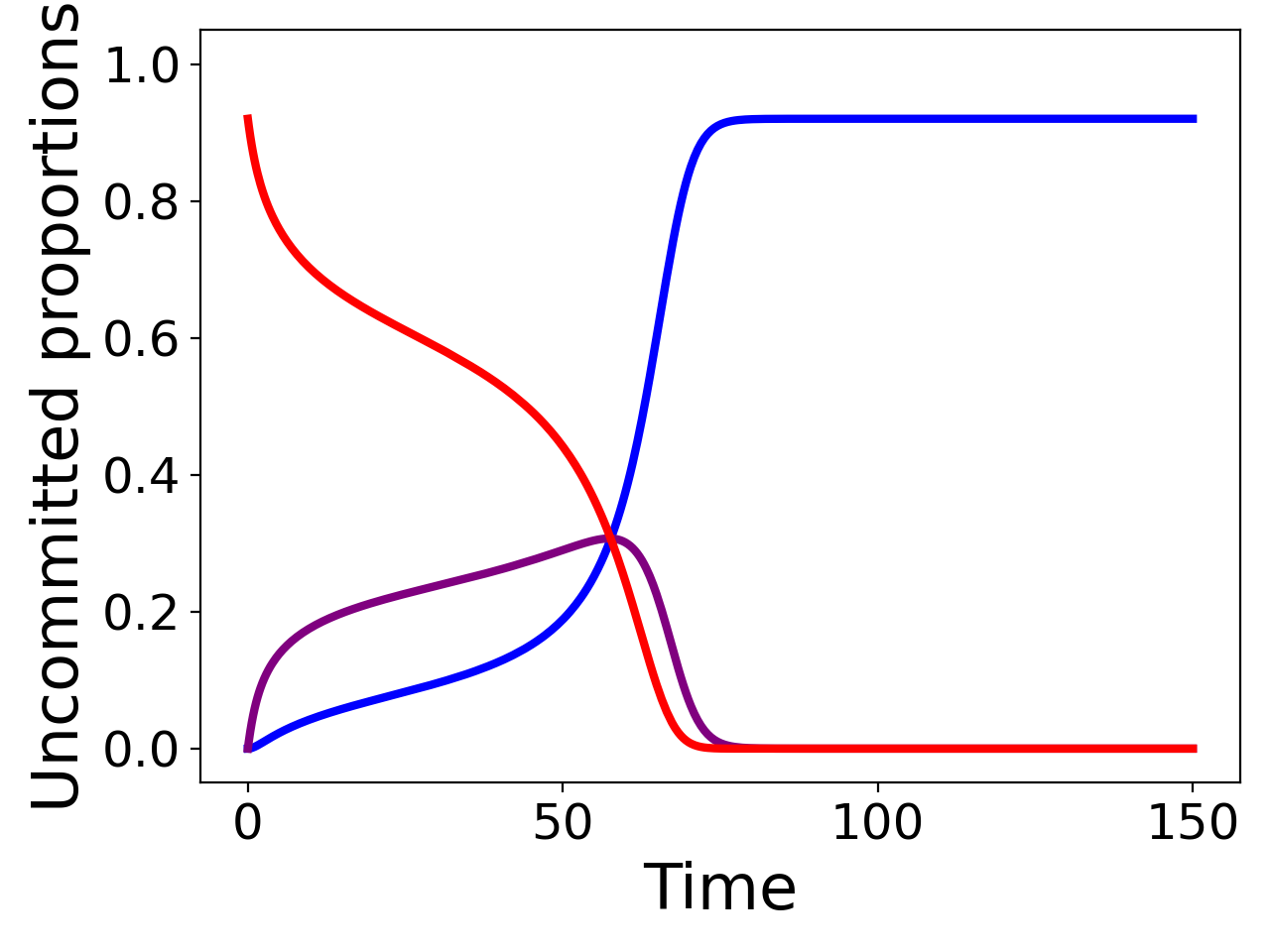}} \\

    \subfloat[$M=3$ model for $\mathcal{C_M}=0.11$ (left) and $\mathcal{C_M}=0.12$ (right). \label{fig:M=3_sim}]{\includegraphics[width=0.48\linewidth]{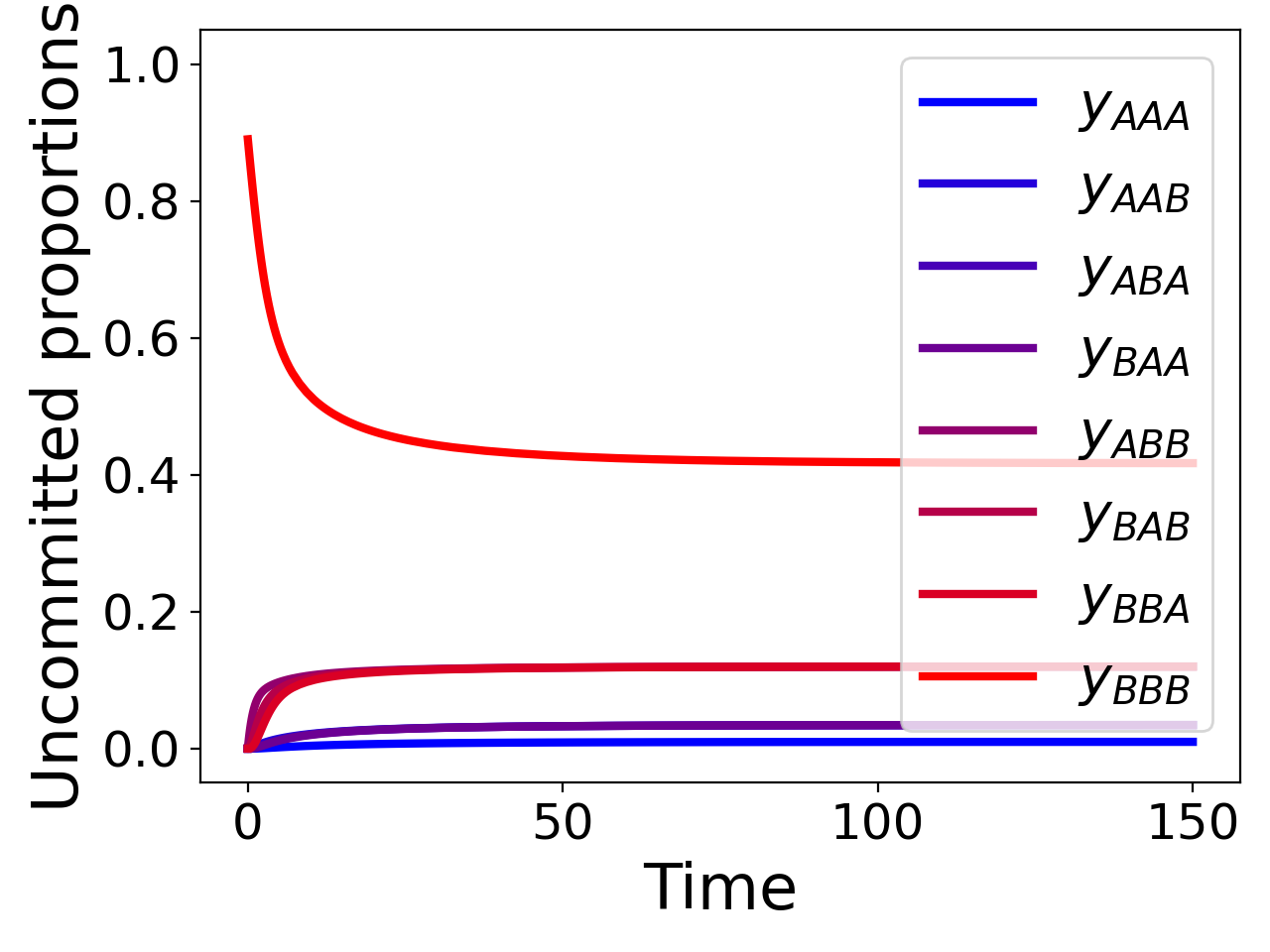} \includegraphics[width=0.48\linewidth]{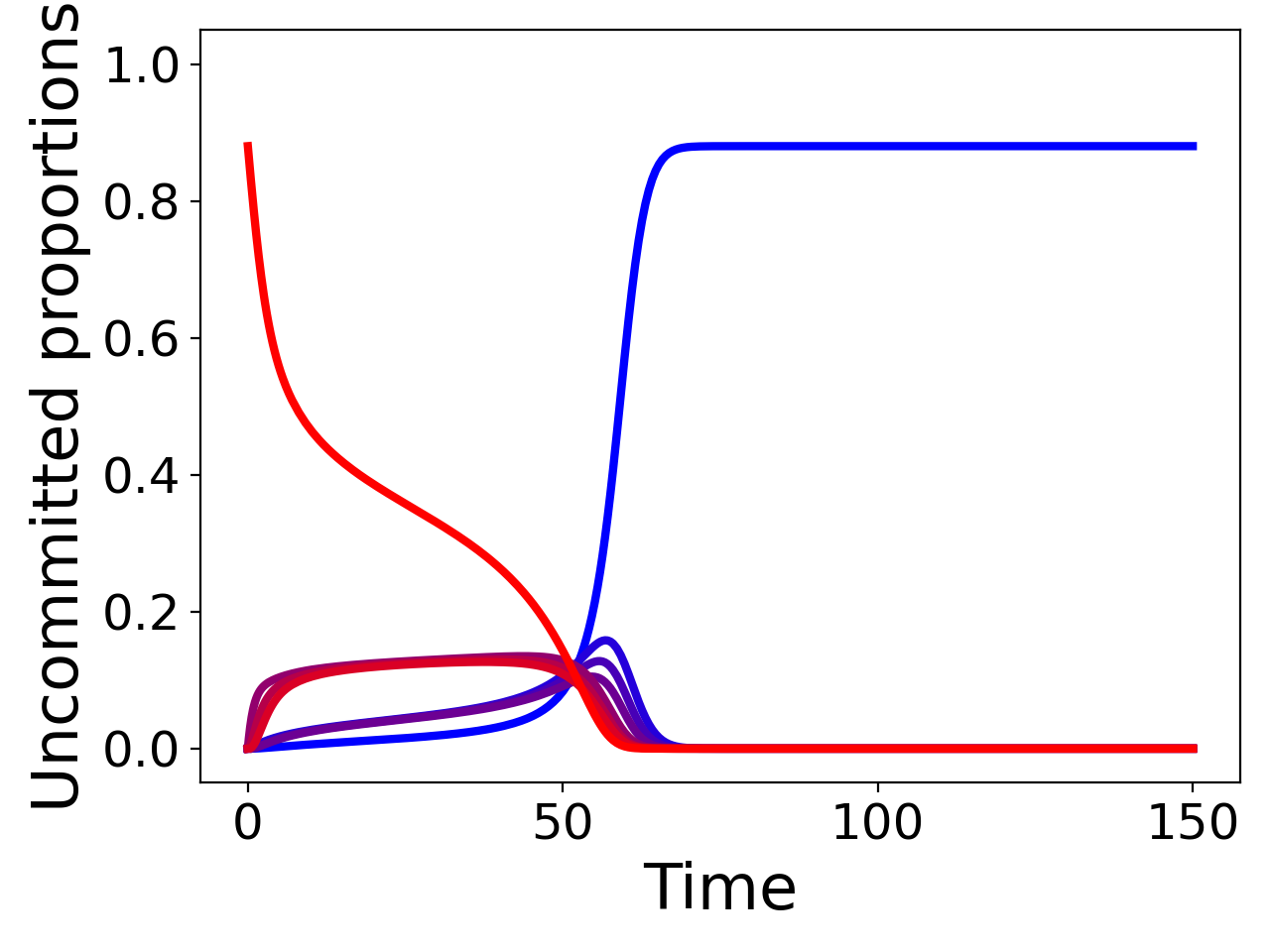}}
    
	\caption{Sample time series for the ODE models with a non-trivial tipping point. Note that the legends are the same within each row.}
\end{figure}

\subsection{Opinion Response Functions} \label{Sect:OpResp_Results}
In this section we establish several properties of the opinion response functions $\Phi$ and $\Psi_{\CM}$ defined in \eqref{eq:Phi} and \eqref{eq:Psi} respectively. We begin by establishing a symmetry property of $\Phi$.

\begin{lemma}\label{lem:Phisym}
For all $r\in [0,1]$, $\Phi(1-r)=1-\Phi(r)$.
\end{lemma}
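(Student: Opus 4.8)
The plan is to exploit the built-in $A\leftrightarrow B$ symmetry of the state space. Let $c\colon S\to S$ be the involution that flips every coordinate, i.e.\ replaces each $A$ by $B$ and each $B$ by $A$. Since $c$ turns every $A$ into a $B$ and vice versa, it satisfies $N(c(x))=M-N(x)$, and being an involution it is a bijection of $S$ onto itself. This map is precisely the combinatorial shadow of the substitution $r\mapsto 1-r$, and the whole proof reduces to two elementary identities followed by a change of summation index in \eqref{eq:Phi}.

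First I would record the two identities. From the definition \eqref{eq:prodmeas}, $\pi_{1-r}(x)=(1-r)^{N(x)}r^{M-N(x)}=r^{M-N(x)}(1-r)^{N(x)}=\pi_r(c(x))$, using $N(c(x))=M-N(x)$. Second, from the definition \eqref{eq:sx}, I would check the three cases: if $N(x)<M/2$ then $N(c(x))>M/2$ so $s(c(x))=1=1-s(x)$; if $N(x)>M/2$ then $s(c(x))=0=1-s(x)$; and if $N(x)=M/2$ then $s(c(x))=1/2=1-s(x)$. Hence $s(c(x))=1-s(x)$ for every $x\in S$.

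With these in hand, I would substitute into \eqref{eq:Phi} and reindex the sum by $y=c(x)$, which is legitimate because $c$ is a bijection. Concretely, $\Phi(1-r)=\sum_{x}\pi_{1-r}(x)s(x)=\sum_{x}\pi_r(c(x))s(x)=\sum_{y}\pi_r(y)s(c(y))=\sum_{y}\pi_r(y)\bigl(1-s(y)\bigr)$. Splitting the last sum and using that $\pi_r$ is a probability measure, $\sum_{y}\pi_r(y)=(r+(1-r))^M=1$, gives $\Phi(1-r)=1-\sum_y\pi_r(y)s(y)=1-\Phi(r)$, as required.

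There is essentially no hard step here; the only point requiring mild care is the tie case $N(x)=M/2$, which occurs for even $M$ and must not be mishandled when verifying $s(c(x))=1-s(x)$. The value $s=1/2$ is its own complement, so the identity holds uniformly across all three cases and the tie contributes symmetrically. As an alternative route I could instead argue directly from \eqref{eq:Phi2}: if $N\sim\mathrm{Binomial}(M,r)$ then $M-N\sim\mathrm{Binomial}(M,1-r)$, so $\Phi(1-r)=\P(N<M/2)+\tfrac12\P(N=M/2)$, and combining this with $\P(N<M/2)+\P(N=M/2)+\P(N>M/2)=1$ again yields $1-\Phi(r)$; I would expect to present the bijection argument as the primary one since it keeps the tie bookkeeping transparent.
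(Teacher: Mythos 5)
Your proposal is correct, and both of your routes work; the one you flag as an ``alternative'' at the end is in fact precisely the paper's proof. The paper argues from the representation \eqref{eq:Phi2}: since $N(X(1-r))$ has the same distribution as $M-N(X(r))$, it rewrites $\P(N(X(1-r))>M/2)$ as $\P(N(X(r))<M/2)$, keeps the tie term unchanged, and finishes by complementation. Your primary argument instead works one level down, at the state space: the involution $c$ with $N(c(x))=M-N(x)$, the identities $\pi_{1-r}(x)=\pi_r(c(x))$ and $s(c(x))=1-s(x)$, a reindexing of the sum in \eqref{eq:Phi}, and the normalization $\sum_y\pi_r(y)=1$. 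These are the same $A\leftrightarrow B$ symmetry expressed at different granularities. The paper's version is shorter because it lets the binomial distribution absorb the combinatorics; yours is slightly longer but makes the symmetry mechanism explicit and never needs to invoke a distributional identity --- it would, for instance, transfer verbatim to any speaking rule $s$ satisfying $s(c(x))=1-s(x)$ and any product measure, which is mildly more general than what the binomial phrasing exposes. Either write-up would be acceptable; your handling of the tie case $N(x)=M/2$ (noting that $1/2$ is its own complement) is exactly the point where care is needed, and you got it right.
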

\begin{proof}
We have
\begin{align*}
    \Phi(1-r) &= \P(N(X(1-r))>M/2)+\frac{1}{2}\P(N(X(1-r))= M/2) \\ 
              &= \P(N(X(r))<M/2) + \frac{1}{2}\P(N(X(r))= M/2). \\
    \shortintertext{Then, by taking the complement, we obtain}
    \Phi(1-r) &= 1 - \left(\P(N(X(r))>M/2) + \frac{1}{2}\P(N(X(r))= M/2) \right)\\
              &= 1 - \Phi(r).
\end{align*}
\end{proof}

In Appendix \ref{append:sig_proof} we show that for $M\in\{1,2\}$, $\Phi(r)=r$. For $M\ge 3$, $\Phi$ is a sigmoid function in the following sense.

\begin{theorem}\label{thm:Phisig}
    Let $\Phi$ be given by \eqref{eq:Phi}. For $M\ge 3$, $\Phi(0)=0,\Phi(1/2)=1/2$ and $\Phi(1)=1$, $\Phi$ is increasing on $(0,1)$, convex on $[0,1/2)$ and concave on $(1/2,1]$.
\end{theorem}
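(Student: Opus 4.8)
The plan is to reduce everything to an explicit formula for $\Phi'$. The three point values come first and are essentially free. From the binomial representation \eqref{eq:Phi2} with $N(X(r))\sim\text{Binomial}(M,r)$, setting $r=0$ forces $N=0<M/2$, so $\Phi(0)=0$, and setting $r=1$ forces $N=M>M/2$, so $\Phi(1)=1$. The midpoint value $\Phi(1/2)=1/2$ follows immediately from Lemma \ref{lem:Phisym}: putting $r=1/2$ gives $\Phi(1/2)=1-\Phi(1/2)$, hence $\Phi(1/2)=1/2$. So the real content of the theorem is monotonicity and the convexity/concavity split, and for these I would compute the first and second derivatives of $\Phi$ directly.

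The key tool is the standard telescoping identity for the derivative of a binomial tail: writing $b_j(r)=\binom{M}{j}r^j(1-r)^{M-j}$, one has $\frac{d}{dr}\sum_{j\ge k}b_j(r)=M\binom{M-1}{k-1}r^{k-1}(1-r)^{M-k}$, which follows because $b_j'(r)=M[\binom{M-1}{j-1}r^{j-1}(1-r)^{M-j}-\binom{M-1}{j}r^{j}(1-r)^{M-j-1}]$ and the sum telescopes. I would then treat the two parities of $M$ separately, since the middle term in \eqref{eq:Phi2} is present only when $M$ is even.

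For odd $M$ the half-integer threshold $M/2$ lies strictly between consecutive integers, so $\Phi(r)=\sum_{j\ge (M+1)/2}b_j(r)$ is a pure tail and the identity gives $\Phi'(r)=M\binom{M-1}{(M-1)/2}[r(1-r)]^{(M-1)/2}$ directly. For even $M$, $\Phi(r)=\sum_{j\ge M/2+1}b_j(r)+\tfrac12 b_{M/2}(r)$, and I would differentiate the two pieces, simplifying the middle-term derivative to $\tfrac{M}{4}\binom{M}{M/2}[r(1-r)]^{M/2-1}(1-2r)$, and then combine. Using $\binom{M}{M/2}=2\binom{M-1}{M/2-1}$ and $\binom{M-1}{M/2}=\binom{M-1}{M/2-1}$, the leftover factor $r$ from the tail derivative and the $\tfrac12(1-2r)$ from the middle term add to the constant $\tfrac12$, so everything collapses to $\Phi'(r)=\tfrac{M}{2}\binom{M-1}{M/2-1}[r(1-r)]^{M/2-1}$. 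This cancellation is the step I expect to be the main obstacle: one must carry the binomial-coefficient bookkeeping carefully to see that the two pieces recombine into a single clean monomial factor, rather than leaving a residual $(1-2r)$ that would spoil monotonicity.

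The payoff is that in both parities $\Phi'(r)=\kappa\,[r(1-r)]^{p}$ with a positive constant $\kappa$ and an integer exponent $p\ge 1$ (here $p=(M-1)/2$ or $p=M/2-1$, both at least $1$ precisely because $M\ge 3$). Monotonicity on $(0,1)$ is then immediate, since $[r(1-r)]^p>0$ there. Differentiating once more gives $\Phi''(r)=\kappa\,p\,[r(1-r)]^{p-1}(1-2r)$, whose sign on $(0,1)$ is that of $(1-2r)$; hence $\Phi$ is convex on $[0,1/2)$ and concave on $(1/2,1]$, which completes the proof.
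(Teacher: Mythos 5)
Your proof is correct, and it takes a genuinely different route from the paper's. The paper first invokes Theorem \ref{thm:dbl} (proved separately by a probabilistic removal argument) to reduce the convexity/concavity claim to odd $M$, and then computes $\Phi'$ probabilistically: coupling the family $(X(r))_{r\in[0,1]}$ through i.i.d.\ uniforms $u_i$ (declaring $X_i(r)=A$ iff $u_i\le r$), it identifies $\Phi(r+h)-\Phi(r)$, up to $O(h^2)$, with the probability that exactly one coordinate flips in $(r,r+h]$ and carries $N(X(\cdot))$ across the threshold $M/2$; concavity on $(1/2,1]$ is then transferred from convexity on $[0,1/2)$ via the symmetry Lemma \ref{lem:Phisym}. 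You instead differentiate the binomial tail algebraically via the telescoping identity, treat both parities directly, and read the convexity/concavity split off the sign of $\Phi''$ --- a self-contained calculus argument that needs neither Theorem \ref{thm:dbl} nor Lemma \ref{lem:Phisym} except for the trivial value $\Phi(1/2)=1/2$. What the paper's route buys is exactly what you flagged as your main obstacle: reducing to odd $M$ avoids the even-$M$ bookkeeping entirely, and the coupling argument produces the derivative with almost no algebra while giving it a probabilistic meaning (a threshold-crossing density). What your route buys is independence from Theorem \ref{thm:dbl}, plus a bonus you did not remark on: your even-$M$ constant $\tfrac{M}{2}\binom{M-1}{M/2-1}$ equals the odd-$(M-1)$ constant $(M-1)\binom{M-2}{M/2-1}$ (both reduce to $(M-1)!\big/\bigl((M/2-1)!\bigr)^2$), so your computation shows $\Phi_M'=\Phi_{M-1}'$ for even $M$; combined with $\Phi_M(0)=\Phi_{M-1}(0)=0$ this yields an alternative, purely computational proof of Theorem \ref{thm:dbl} itself. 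The two derivative formulas also agree in the odd case, since $M\binom{M-1}{(M-1)/2}=\binom{M}{(M-1)/2}\tfrac{M+1}{2}$, so your result is consistent with the paper's in every detail.
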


\begin{proof}
Let $X(r)$ be as in \eqref{eq:Phi2}. The three fixed points are easily shown: if $r=0$ then $X(r)=0$ with probability 1 so from \eqref{eq:Phi2}, $\Phi(0)=0$. Lemma \ref{lem:Phisym} implies $\Phi(1)=1-\Phi(0)=1$, and plugging in $r=1/2$, $\Phi(1/2)=1-\Phi(1/2)$ so $\Phi(1/2)=1/2$.

To prove convexity/concavity, by Theorem \ref{thm:dbl} below it suffices to consider odd $M$. In this case, $\Phi(r)=\P(N(X(r))>M/2)$. We'll proceed by computing $\Phi'(r)$ but we'd like to do it without a ton of algebra. Represent $X(r)$ as follows: fix i.i.d.~uniform random variables $(u_i)_{i=1}^m$ and say that $X_i(r)=A$ if $u_i \le r$. This is a coupling of the random variables $(X(r)\colon r\in[0,1])$ with the property that if $X_i(r)=A$ and $s>r$ then $X_i(s)=A$. In particular, letting $C(r,h)=\{N(X(r))\le M/2<N(X(r+h))\}$,
\[\Phi(r+h)-\Phi(r) = \P(C(r,h)).\]
If $X_i(r)=B$ and $X_i(r+h)=A$ then $r<u_i \leq r+h$. If $N(X(r))\le M/2<N(X(r+h))$ then up to an event of probability $O(h^2)$ for small $h$, exactly one entry changed between $r$ and $r+h$. In other words, if $\Delta$ is the symmetric difference then
\[\P(C(r,h)\Delta E) = O(h^2)\]
where $E$ is the event that there is a set $I\subset \{1,\dots,M\}$ of size $(M-1)/2$ and an element $j\notin I$ such that $u_i \leq r$ for all $i\in I$, $r<u_j\leq r+h$ and $u_i>r+h$ for all $i\notin I\cup \{j\}$. There are $\binom{M}{(M-1)/2}$ choices for $I$, $(M+1)/2$ choices for $j$, and the probability that the entries are in the required intervals is $hr^{(M-1)/2}(1-(r+h))^{(M-1)/2}$. So,
    \[\P(E)=\binom{M}{(M-1)/2} \frac{M+1}{2} \,hr^{(M-1)/2}(1-(r+h))^{(M-1)/2}.\]
    Notice that
    \[\frac{1}{h}(\Phi(r+h)-\Phi(r)) = \P(E) + O(h).\]
    So, dividing by $h$ and letting $h\to 0$,
    $$\Phi'(r) = \binom{M}{(M-1)/2} \frac{M+1}{2} (r(1-r))^{(M-1)/2}.$$
    In particular, $\Phi'$ is positive, and thus $\Phi$ is increasing, on $(0,1)$, and $\Phi'$ is increasing on $[0,1/2)$, so $\Phi$ is convex on $[0,1/2)$ and, applying Lemma \ref{lem:Phisym}, $\Phi$ is concave on $(1/2,1]$.
\end{proof}

When $\mathcal{C_M}=0$, $\Psi_\mathcal{C_M}(r)=\Phi(r)$ so for $M\ge 3$, $\Psi_{\CM}$ has three fixed points: $0,1/2$ and $1$. When $\mathcal{C_M}=1$, $\Psi_\mathcal{C_M}(r)=1$ for all $r$ and the only fixed point is $1$. In other words, a bifurcation must occur at some $\mathcal{C_M}^*\in(0,1)$. In Appendix \ref{append:bifur_proof}, we show that if $\Phi$ is a sigmoid function then there is a unique $\mathcal{C_M}^*$ at which a saddle-node bifurcation occurs.

As shown in Theorem \ref{thm:ODEeq} there is a one-to-one correspondence between the equilibria of the ODE models and the fixed points of the opinion response functions. Hence, we can use the fixed point equation $\Psi_\mathcal{C_M}(r)=r$ to find the critical value $\mathcal{C_M}^*$ for various $M$ (Figure \ref{fig:OpResp_Tipping}) and plot bifurcation diagrams (Figure \ref{fig:OpResp_Bifur}). We note that the tipping points found using opinion response functions are analytically computed and closely match the estimates found using the ABM. Additionally, using these figures, we can guess the behaviour for all memory bank lengths. In particular, $\mathcal{C_M}^*$ appears to increase with $M$ towards some limit as $M\to\infty$. If the limit exists then it must be $\le 1/2$; see the discussion below the proof of the upcoming Theorem \ref{thm:dbl}.

\begin{figure}[tbph]
    \centering
    \includegraphics[width=0.6\linewidth]{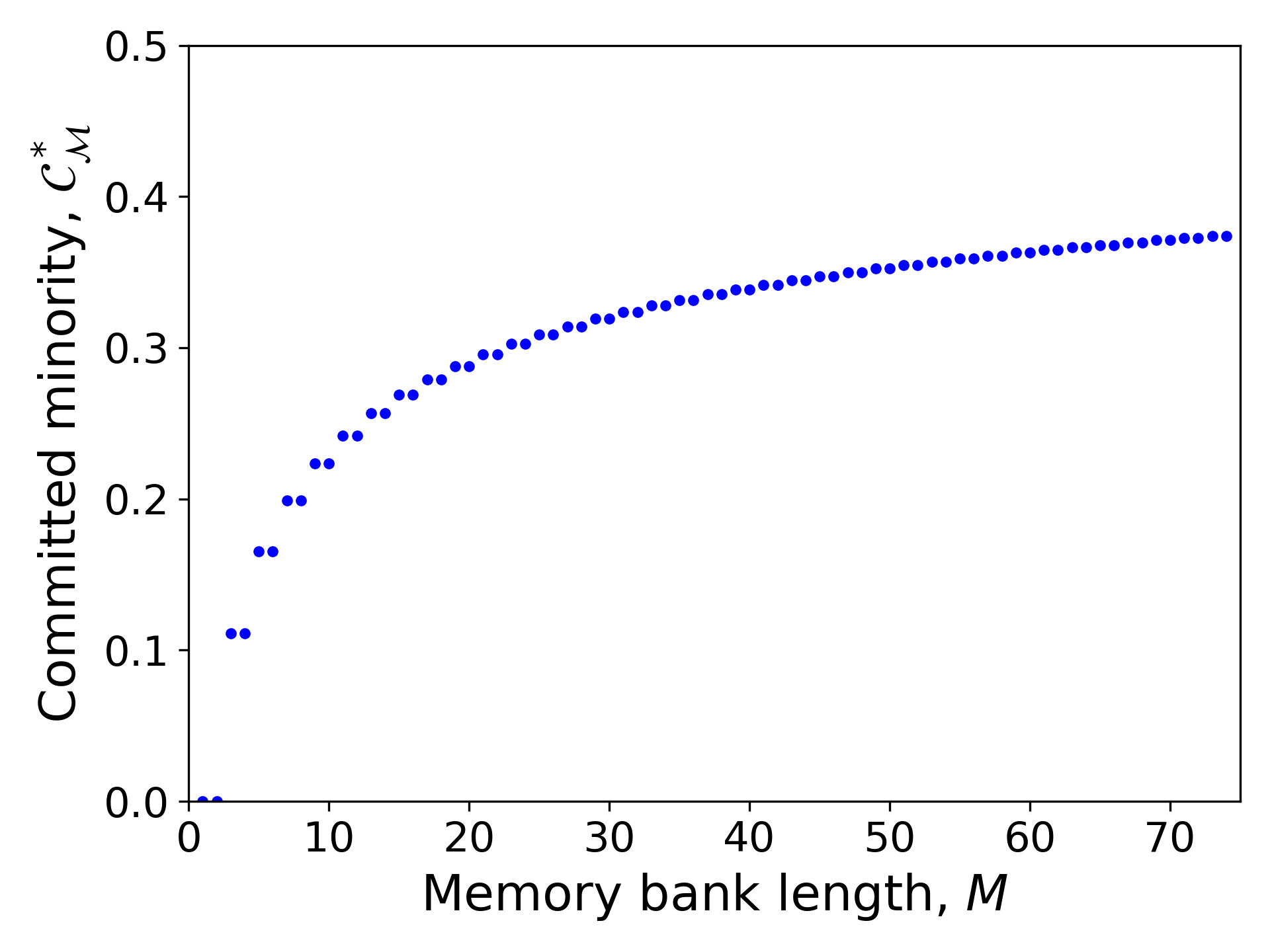}
    \caption{A plot of $\mathcal{C_M}^*$, the minimal committed minority size required to overturn a social convention, versus $M$, the memory bank length, from the opinion response functions.}
    \label{fig:OpResp_Tipping}
\end{figure}

Figures \ref{fig:ABM_TippingPoints} and \ref{fig:OpResp_Tipping} both suggest that the values of $\mathcal{C_M}^*$ are grouped in pairs; for example, $\CM^*$ appears to be $0$ for both $M=1$ and $M=2$. This pattern suggests that increasing the memory bank length from $M-1$ to $M$ for even $M$ does not change the fixed points of $\Psi_{\CM}$. We prove a stronger and somewhat curious result: that the ORFs for $M-1$ and $M$ are identical.

\begin{theorem}\label{thm:dbl}
Denote the function defined in \eqref{eq:Phi} by $\Phi_M$, to emphasize the dependence on the integer $M\ge 1$. If $M$ is even then $\Phi_M=\Phi_{M-1}$.
\end{theorem}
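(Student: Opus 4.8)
The plan is to work with the binomial representation \eqref{eq:Phi2} and to couple the distributions for $M$ and $M-1$ through a single extra trial. Write $N_M$ for a Binomial$(M,r)$ variable and $N_{M-1}$ for a Binomial$(M-1,r)$ variable, realized so that $N_M = N_{M-1} + B$ with $B$ an independent Bernoulli$(r)$ (the $M$-th coordinate). Since $M$ is even, put $m = M/2$; then $M-1 = 2m-1$ is odd, so there is no tie at level $(M-1)/2$ and
\[\Phi_{M-1}(r) = \P\!\left(N_{M-1} > \tfrac{M-1}{2}\right) = \P(N_{M-1}\ge m).\]

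First I would evaluate $\Phi_M(r)$ by conditioning on $N_{M-1}$ and resolving the fair tie-break at $m$. When $N_{M-1}\ge m+1$ we have $N_M > m$ regardless of $B$, contributing probability $1$; when $N_{M-1}\le m-2$ we have $N_M<m$, contributing $0$. The only delicate events are the near-ties $N_{M-1}\in\{m-1,m\}$, where both $B$ and the tie-breaking coin enter. A short case analysis of these two values (using that a tie at $m$ speaks $A$ with probability $1/2$) yields, after subtracting the expression for $\Phi_{M-1}$ above, that almost everything cancels and
\[\Phi_M(r) - \Phi_{M-1}(r) = \frac{r}{2}\,\P(N_{M-1}=m-1) - \frac{1-r}{2}\,\P(N_{M-1}=m).\]

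The crux is then to see that this residual vanishes identically in $r$. Substituting the binomial point masses gives $r\,\P(N_{M-1}=m-1) = \binom{2m-1}{m-1}r^m(1-r)^m$ and $(1-r)\,\P(N_{M-1}=m) = \binom{2m-1}{m}r^m(1-r)^m$, so the two terms agree precisely because $\binom{2m-1}{m-1} = \binom{2m-1}{m}$ by the symmetry $\binom{n}{k}=\binom{n}{n-k}$ with $n = 2m-1$. Hence $\Phi_M \equiv \Phi_{M-1}$.

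I expect the main obstacle to be bookkeeping rather than conceptual: correctly accounting for the fair tie-break in the two near-tie cases $N_{M-1}\in\{m-1,m\}$, since a sign slip or a misplaced factor of $\tfrac12$ there would obscure the clean cancellation. It is worth noting that the whole identity ultimately rests on the elementary symmetry of the central binomial coefficients, and that the statement is consistent with the known base case $\Phi_1(r)=\Phi_2(r)=r$ recorded for $M\in\{1,2\}$.
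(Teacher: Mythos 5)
Your proof is correct, but it runs in the opposite direction to the paper's. You couple \emph{upward}: realize $N_M = N_{M-1} + B$ by adjoining an independent Bernoulli$(r)$ trial to the $(M-1)$-trial variable, condition on $N_{M-1}$, and reduce everything to the residual identity $\tfrac{r}{2}\,\P(N_{M-1}=m-1) = \tfrac{1-r}{2}\,\P(N_{M-1}=m)$, which you settle by writing out the binomial point masses and invoking the symmetry $\binom{2m-1}{m-1}=\binom{2m-1}{m}$; I checked the case analysis and your stated residual is exactly what it yields, so the argument closes. The paper instead couples \emph{downward}: it writes $Z_1\sim\text{Binomial}(M,r)$ as a sum of $M$ Bernoulli trials and produces $Z_2\sim\text{Binomial}(M-1,r)$ by deleting a uniformly random coordinate, then verifies the three conditional probabilities $\P(Z_2>(M-1)/2\mid Z_1>M/2)=1$, $\P(Z_2>(M-1)/2\mid Z_1<M/2)=0$ and $\P(Z_2>(M-1)/2\mid Z_1=M/2)=1/2$, the last holding because, given a tie, the deleted coordinate is an $A$ with probability exactly $1/2$ by exchangeability. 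What the paper's route buys is that these three numbers involve no binomial coefficients and no $r$ at all: the proof is formula-free and makes transparent \emph{why} the identity holds --- the fair coin in the tie-break of $\Phi_M$ is precisely the probability that a random deletion resolves a tie upward. What your route buys is elementarity and explicitness: it pinpoints the cancellation as the central-binomial symmetry, at the cost of carrying the $r$-dependent weights through the two near-tie cases $N_{M-1}\in\{m-1,m\}$, which is exactly the bookkeeping you flagged as the fragile step (and which you handled correctly).
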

\begin{proof}
    Let $N(X(r))$ have distribution Binomial$(M,r)$, then by definition 
    $$\Phi_M(r) = \P(N(X(r))>M/2) + \frac{1}{2}\P(N(X(r))=M/2).$$
    Let $Y_1, ..., Y_M$ be independent and identically distributed Bernoulli$(r)$ so that $Z_1:=Y_1+...+Y_M$ has distribution Binomial$(M,r)$, i.e., $\Phi_M(r) = \P(Z_1>M/2) + \frac{1}{2}\P(Z_1=M/2)$. If we randomly remove one of the $Y_i$ values from the sum that defines $Z_1$, then we obtain a random variable $Z_2$ with distribution Binomial$(M-1,r)$. Formally, let $I$ be uniformly distributed on $\{1,...,M\}$ independent of $(Y_i)^M_{i=1}$, then $Z_2:=\sum_{i\neq I}Y_i$ has distribution Binomial$(M-1,r)$, i.e., $\Phi_{M-1}(r) = \P(Z_2>(M-1)/2)$. We want to show that
    \[\P(Z_2>(M-1)/2) = \P(Z_1>M/2)+\frac{1}{2}\P(Z_1=M/2).\]
    To do so it suffices to show that
    \begin{enumerate}[noitemsep]
        \item $\P(Z_2>(M-1)/2 \mid Z_1>M/2) = 1$,
        \item $\P(Z_2>(M-1)/2 \mid Z_1<M/2)=0$, and
        \item $\P(Z_2>(M-1)/2 \mid Z_1=M/2)=1/2$.
    \end{enumerate}
    However,
    \begin{enumerate}
        \item if $Z_1>M/2$, then since $M$ is even and $Z_1$ is an integer $Z_1\ge M/2+1$, so $Z_1>M/2+1/2$ and $Z_1-1>(M-1)/2$. Moreover, $Z_2\geq Z_1-1$, so $Z_2>(M-1)/2$.
        \item Similarly, if $Z_1<M/2$, then $Z_1\le M/2-1$, so $Z_2\le Z_1 \le (M-1)/2$.
        \item If $Z_1=M/2$, then $\P(Y_I=1)=1/2$. So, with equal probability we remove either a $0$ or a $1$ from $Z_1$ to obtain $Z_2$. Since $Z_1>(M-1)/2>Z_1-1$, $\P(Z_2>(M-1)/2 \mid Z_1=M/2)=1/2$.
    \end{enumerate}
\end{proof}

In order to determine $\lim_{M\to\infty} \mathcal{C_M}^*$, we must study $\Psi_{\CM}$ for large $M$. If $X(r)$ has distribution Binomial$(M,r)$ and $r$ is fixed then the weak law of large numbers implies that for any $\epsilon>0$, $\P(|X(r) - rM| > \epsilon M)\to 0$ as $M\to\infty$. From the definition of $\Phi$ it follows that
\begin{equation}
   \lim_{M\to\infty} \Phi(r)=\begin{cases} 0 & \text{if} \ r<1/2, \\
    1 & \text{if} \ r>1/2.
              \end{cases} 
\end{equation}
In turn,
\begin{equation}
   \lim_{M\to\infty} \Psi_{\mathcal{C_M}}(r)=\begin{cases} \mathcal{C_M} & \text{if} \ r<1/2, \\
                            1 & \text{if} \ r>1/2.
              \end{cases} 
\end{equation}
Since for finite $M$, $\Psi_{\CM}$ is continuous, it follows that $\mathcal{C_M}^* \to 1/2$ as $M\to\infty$. If we view the large-$M$ limit of $\Psi_{\CM}$ as a function $\Psi^\infty_{\CM}$, then for $\CM<1/2$, $\Psi^\infty_{\CM}$ has fixed points at $\CM$ and $1$, while for $\CM>1/2$ it has only the ``consensus on opinion $A$'' fixed point $r=1$. The behaviour of the equilibria follow a trend where, as $M$ increases, $\mathcal{C_M}^*$ increases and the shape of the bifurcation diagrams becomes more triangular (Figure \ref{fig:OpResp_Bifur}). We note that when $M=1$, we have $\mathcal{C_M}^*=0$ and when $M\to\infty$, we have $\mathcal{C_M}^*=0.5$. These values serve as lower and upper bounds, respectively, for the committed minority size required to cause the overturning of a social convention in the real world. 

\begin{figure}[tbph]
    \centering
    \includegraphics[width=0.6\linewidth]{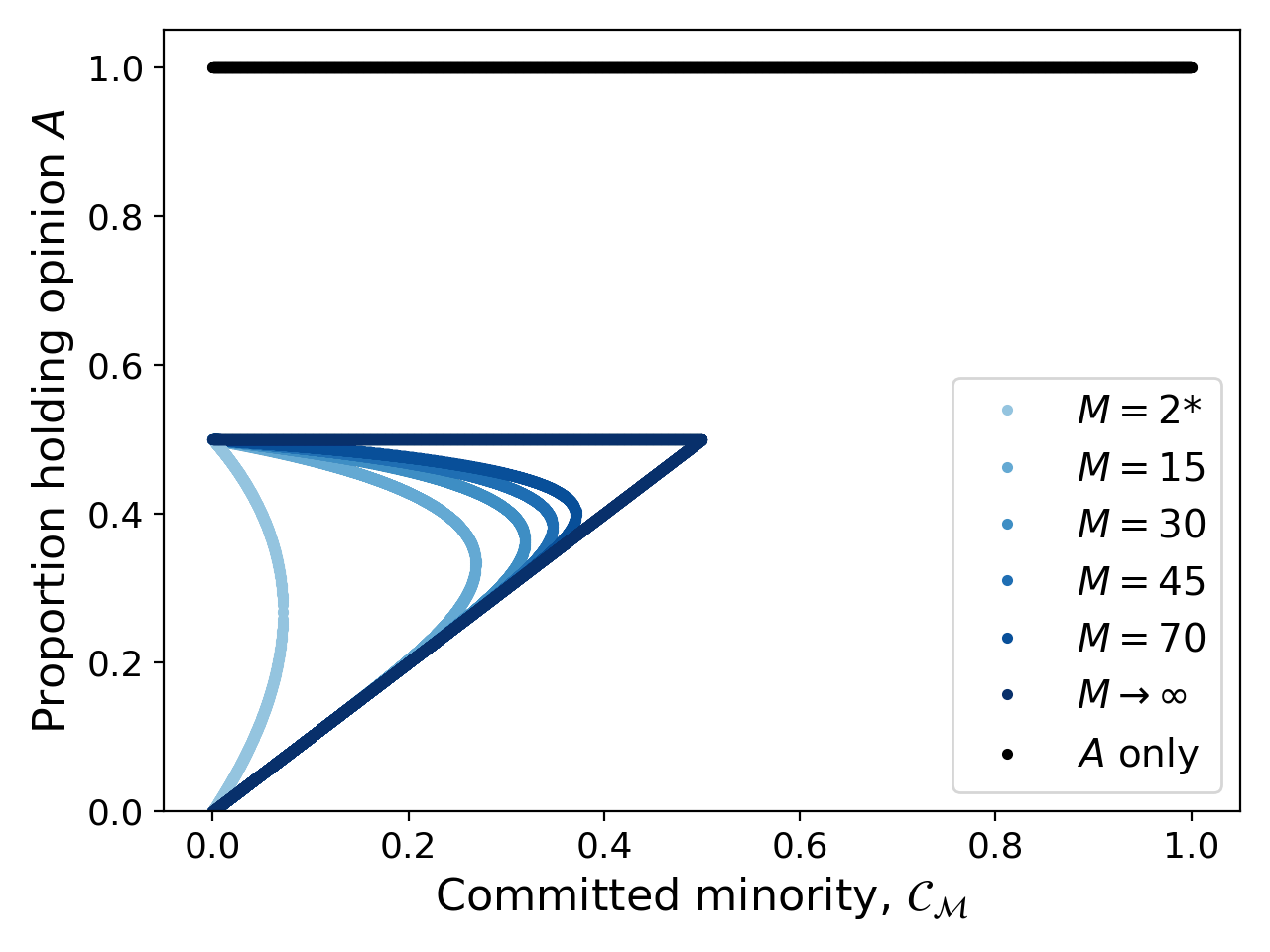}
    \caption{Bifurcation diagrams for various values of $M$. Note that all diagrams have the consensus on $A$ opinion equilibrium ($A$ only) for all $\mathcal{C_M}$ and $M=2^*$ is the $M=2$ with combined undecided individuals case.}
    \label{fig:OpResp_Bifur}
\end{figure}

\section{Discussion} \label{Sect:Discussion}
Studying equilibria and, if applicable, bifurcations of mathematical models can provide a framework for the expected deterministic model behaviour. \citet{Diekmann2003} introduce a method to decouple model mechanisms and compute the equilibria of a model. The authors apply their framework to a predator-prey model where they decouple the populations from their respective environments. In particular, the prey are part of the predator's environment and vice versa. Making this distinction between population and environment allows the authors to more easily study equilibria and bifurcation structure. In this work, we leverage this idea to study the deterministic behaviour of our opinion dynamics model when the ABM or ODE forms are unwieldy.

We note that the ODE $M=2$ models (with and without combined undecided compartments) have fairly similar definitions, however, the resulting dynamics are very different. In particular, the $M=2$ model has a trivial tipping point, while the $M=2$ with combined undecided compartments has a non-trivial tipping point. Opinion response functions show that this different model behaviour is a result of different functional forms for the speaking rate of opinion $A$. When the undecided compartments are combined we obtain $\Phi(r)=(r^2+r)/2(r^2-r+1)$, which is a sigmoid function, and when these groups are not combined we obtain $\Phi(r)=r$, see Appendix \ref{append:sig_proof}. A possible reason for this difference in functional form is the loss of movement between the $X_{AB}$ and $X_{BA}$ populations (i.e., the interactions where $X_{AB}$ switches to $X_{BA}$ and vice versa). In the $M=2$ model, these switches between $X_{AB}$ and $X_{BA}$ cancel each other out and $\Phi(r)=r$ for all $r$. However, when the undecided compartments are combined, $\Phi(r)<r$ when $r<0.5$. This decreased speaking rate slows down the shift towards the consensus on $A$ state and the result is a non-trivial tipping point.

Our choice to have undecided individuals remain undecided and ``flip a coin'' when they speak is commonly used in modelling \citep{Xie2011, Centola2018, Balenzuela2015}. There are a few additional ways we could have chosen to determine the behaviour of undecided individuals. Following recency effects, an individual might weigh recent memories as more important \citep{Cromwell1950}. On the other hand, an individual might weigh older memories as more important (i.e., primacy effect) if reinforcement learning or the anchoring effect plays a larger role \citep{Lund1925, Knower1936}. Lastly, an individual may choose to not speak when undecided or a modeller may choose not to allow undecided individuals to exist \citep{Baumgaertner2016}. Since we found different model behaviour for the $M=2$ models when we combined the undecided compartments, we hypothesize that the choice of undecided speaking behaviour can have a strong effect on model dynamics and warrants further study. 

In our $M\in\{2,3\}$ ODE models, we note that the moderate opinion groups (i.e., those not holding all $A$ or all $B$ in their memory banks) initially increase after the tipping point is crossed and then decrease as the consensus on $A$ behaviour takes over. This initial behaviour is called ``centering'' and has been found in other opinion dynamics models \citep{Baumgaertner2016, Baumgaertner2018a, Iacopini2022}. In \citet{Baumgaertner2016, Baumgaertner2018a}, the authors assume that an interaction between two agents holding the same opinion will result in amplification of that opinion. When the probability of this amplification increases, they find that the system reaches consensus faster and there is a smaller degree of centering. This behaviour is similar to our ODE models when we increase $\mathcal{C_M}$. In \citep{Iacopini2022}, however, the intermediate populations can persist and do not always decay. The difference may lie in the authors' use of a parameter to describe the tendency of individuals to update their opinion or not following an interaction. In particular, when the individuals are less likely to update their opinions, the centering behaviour remains \citep{Iacopini2022}.

In this manuscript, we use ``listening'' and ``hearing'' interchangeably since the listeners in our models always listen to and take in the opinions that they hear. We could have instead defined an opinion difference threshold above which individuals do not update their memory banks \citep{Deffuant2000, Weisbuch2002}. Previous work has also studied the dynamics when this threshold varies between individuals in the population \citep{Granovetter1978}. In these bounded confidence models, high thresholds can result in consensus and low thresholds can result in multiple polarized groups. An individual may be less likely to listen if they are stubborn \citep{Biswas2009}, hold their opinion strongly \citep{Baumgaertner2016}, or hold vested interests \citep{Crano1995, DeDominicis2021, Gussmann2021}. The authors find that when individuals are allowed to increase their opinion strength or become stubborn, the result is polarization of opinions \citep{Biswas2009, Baumgaertner2016}. An extension to our social model is formulating a rule where individuals become stubborn and do not listen when their memory banks are too different from each other. In order to study the model dynamics, we must consider a population with varying opinion thresholds or redefine our initial conditions. Otherwise, the committed minority (who only have memories of type $A$) won't be able to convince anyone in the uncommitted population (who initially have only memories of type $B$) to listen and there will be no change in the distribution of opinions. We hypothesize that, similar to other models with stubbornness or opinion thresholds, the resulting model behaviour will be polarization.

Another deterrent to interaction and the sharing of opinions is the inability to interact. In our models, we assume the system is well-mixed and that every individual is able to interact with any other individual. Incomplete graphs can be used to model opinion dynamics systems when it is not guaranteed that every individual can interact with every other individual \citep{Mariano2020}. These incomplete graphs can represent echo chambers and the authors find model dynamics result in polarization. More complex network models can use asymmetric graphs \citep{Talebi2024, Su2021}. In such cases, there exist pairs of individuals where one individual can speak their opinion to another individual, but the other individual is less able to or cannot speak back due to, e.g., lack of influence \citep{Talebi2024}, disability \citep{Liu2018}, or social status \citep{Su2021, HasaniMavriqi2016}. We expect that using an incomplete graph will increase the size of committed minority required to overturn a social convention.

Previous work has found that variables other than committed minority size can also influence opinion dynamics and the occurrence of tipping events \citep{Andersson2021, Iacopini2022, Nowak2004}. \citet{Andersson2021} study a bounded confidence model with an external field that represents factors such as government policy, or mass media. The authors find that increasing the external field causes a tipping event where below the tipping point, agents are more likely to defect and above the tipping point, agents are more likely to cooperate. \citep{Iacopini2022} obtain a similar result and find that consensus becomes possible above a certain listening threshold. Below this threshold, the individuals are less likely to listen or update their opinions following an interaction, resulting in polarization and the majority of the population being undecided \citep{Iacopini2022}. In \citep{Nowak2004}, the population size can determine whether a committed individual is capable of upsetting the defection social convention and achieving consensus on cooperation. In particular, the authors note that it is difficult to reach the minimum critical mass in large populations while small populations may not have enough individuals for cooperation to win in a cost-benefit analysis. 

A key result from our work is that there is indeed a minimum committed minority size below which tipping of a social convention cannot occur when the memory bank length is at least three. Similar $\mathcal{C_M^*}$ are observed in models based on the Voter model \citep{Holley1975}, the naming game \citep{Baronchelli2006, Xie2011}, and a two state model with group interactions \citep{Galam2007, Galam2010, Galam2020, Iacopini2022}. Our results also compare well with those from \citet{Centola2018} who use the same interaction rules as our models. While our result appears different from their non-trivial tipping point when $M=2$, this apparent discrepancy is an artifact of their choice of a larger error bound. We note that the model dynamics from each of these models are similar, even though individuals in each model type interact following different rules. This similarity in results suggests that how the individuals interact does not have a large impact on model dynamics and the size of the committed minority is the main driver of behaviour in these models.

The committed minority, however, need not be one sided. Various papers consider multiple committed minorities, one for each opinion \citep{Mobilia2007, Biswas2009, Galam2007, Galam2010, Galam2020}. They find that committed minorities of equal size and opposite opinion prevent consensus \citep{Mobilia2007} and the unstable steady state is at exactly $r_A=r_B=0.5$ \citep{Galam2007}. That is, if opinion $A$ is initially the social convention, then it remains the majority opinion and vice versa \citep{Galam2007}. These results suggest that the ``pulling up'' of the opinion response functions from the opinion $A$ committed minority is exactly balanced by a ``pulling down'' from the opinion $B$ committed minority when the minorities are equal in size. When the minorities are not equal in size, the result is consensus on the opinion of the larger committed minority \citep{Biswas2009, Mobilia2007, Galam2007, Galam2010, Galam2020}. Expanding our opinion response functions to include a committed minority holding opinions $A$ and $B$ would allow us to gain further insights on human behaviour when there are vested interests on both sides of an argument. In particular, we could study the model behaviour for varying committed minority sizes and determine if the bifurcation from coexistence to consensus becomes reversible if there are two committed minorities.

\section{Acknowledgement}
We thank Rebecca C. Tyson and Bert Baumgaertner for their support. This work was supported by Natural Sciences and Engineering Research Council of Canada [grant numbers RGPIN-2024-04653, CGS-M]. The funding sources were not involved in the conduction of research or preparation of this article. 

\section{Code Availability}
The associated code is available at \url{https://github.com/sarahwyse/OpinionResponseFunctions.git} and the simulations in this manuscript are run in Python 3.9.12 (Spyder 5.4.3).

\bibliographystyle{plainnat}
\bibliography{Sources.bib}

\begin{appendices}
\section{ODEs for the modified $M=2$ case} \label{append:ode_eqs}

The speaking rate of opinion $A$ is $r_A=y_{AA}+\tfrac{1}{2}y_{U}+\mathcal{C_M}$ and the speaking rate of opinion $B$ is $r_B=y_{BB}+\tfrac{1}{2}y_{U}$. Hence, the full system of equations for the $M=2$ model with a combined undecided group is: 
\begin{align}\label{sup:eq:Model_M=2_simp}
	\frac{dy_{AA}}{dt} &= r_A(y)y_{U} - r_B(y)y_{AA}, \nonumber \\
    \frac{dy_{U}}{dt} &= r_B(y)y_{AA} + r_A(y)y_{BB} - (r_A(y) + r_B(y))y_{U}, \\
    \frac{dy_{BB}}{dt} &= r_B(y)y_{U} - r_A(y)y_{BB}. \nonumber 
\end{align}

\section{Computation of $\Phi$ for $M\le 3$ and \\ the modified $M=2$ case} 
\label{append:sig_proof}
\underline{$M=1$:} $$\Phi(r) = \P(N(X(r))=1) = r$$

\underline{$M=2$:} $$\Phi(r) = \P(N(X(r))=2)+\tfrac{1}{2}\P(N(X(r))=1) = r^2+\frac{1}{2}2r(1-r) = r$$

\underline{$M=2$ with combined undecided individuals:} \\ 
For this model we take $\Phi$ to have the same form as in \eqref{eq:Phi} except with $S=\{A,B,U\}$, $s(A)=1$, $s(U)=\tfrac{1}{2}$ and $s(B)=0$ and $\pi_r$ equal to the stationary distribution of the Markov chain on $\{A,U,B\}$ with transitions $A\to U$ and $U\to B$ at rate $1-r$ and $B\to U$, $U\to A$ at rate $r$. Following the same approach as in Theorem \ref{thm:ODEeq} (omitted) it is not hard to prove an analogous result, i.e., that $y$ is an equilibrium of \eqref{sup:eq:Model_M=2_simp} iff $y=(1-\CM)\pi_r$ for some $r$ that satisfies $r=\Psi_{\CM}(r)$, where $\Psi_{\CM}=\CM+(1-\CM)\Phi$.\\

To compute $\Phi$, first note that since the Markov chain has no minimal loops of length greater than $2$, $\pi_r$ satisfies detailed balance, i.e., $(1-r)\pi_r(A)=r\pi_r(U)$ and $(1-r)\pi_r(U)=r\pi_r(B)$ so with $\alpha=r/(1-r)$, $\pi_r(A)=\alpha \pi_r(U)=\alpha^2 \pi_r(B)$, giving
$$(\pi_r(A),\pi_r(U),\pi_r(B)) = \frac{1}{1+\alpha+\alpha^2} ( \alpha^2, \ \alpha, \ 1).$$
Then,
$$\Phi(r) = \pi_r(A)+\frac{1}{2}\pi_r(U)=\frac{\alpha^2+\alpha/2}{1+\alpha+\alpha^2} = \frac{r^2+r}{2(r^2-r+1)}.$$
We have 
$$\Phi'(r) = \frac{-2r^2+2r+1}{2(r^2-r+1)^2}$$
and 
$$\Phi''(r) = \frac{(-2r+1)(-2r^2+r+2)}{(r^2-r+1)^3}.$$
$\Phi'$ is positive for $r\in[0,1]$ and is an increasing function. Further, $\Phi''$ is positive for $r\in[0,1/2)$, so $\Phi$ is convex over $[0,1/2)$. The Markov chain is symmetric with respect to the simultaneously exchanging $A$ with $B$ and replacing $r$ with $1-r$, from which $\Phi(1-r)=1-\Phi(r)$ (omitted), which implies $\Phi$ is a sigmoid function.

\underline{$M=3$:} $$\Phi(r) = \P(N(X(r))\geq2) = r^3+3r^2(1-r) = 3r^2-2r^3$$
We then have $\Phi'(r)=6r-6r^2=6r(1-r)$ which is positive on $(0,1)$, increasing on $[0,1/2)$ and decreasing on $(1/2,1]$. Hence, $\Phi$ is increasing, convex on $[0,1/2)$ and concave on $(1/2,1]$, thus sigmoid.

\section{Characterization of fixed points of $\Psi_{\CM}$} \label{append:bifur_proof}

Let $\Psi_{\CM}$ be given by \eqref{eq:Psi}. Recall that $r$ is a fixed point of $\Psi_{\mathcal{C_M}}$ if $\Psi_{\mathcal{C_M}}(r)=r$. In this section we characterize the fixed points of $\Psi_{\CM}$ for $M\ge 3$. We already know that $1$ is a fixed point of $\Psi_{\CM}$ for all $\CM$, since $\Phi(1)=1$. Moreover, if $\Phi(1/2)=1/2$ and $\Phi$ is concave on $[1/2,1)$, which holds for $M\ge 3$, then $\Phi(r)>r$ for $1/2<r<1$, so $\Psi_{\CM}$ has no fixed points on $(1/2,1)$. In that case, the following result suffices to characterize its fixed points. It is phrased somewhat more generally, as we only need to assume that $\Phi$ has certain properties. As a result, Theorem \ref{thm:sd} is applicable not only to the ORF \eqref{eq:Phi} for all $M\ge 3$, see Theorem \ref{thm:Phisig}, but also for $\Phi$ from the modified $M=2$ case, see Appendix \ref{append:sig_proof}.
\begin{theorem}\label{thm:sd}
Let $\Phi$ be a $C^1$ (continuously differentiable) function $\Phi$ on $[0,1/2]$ that satisfies $\Phi(0)=0$ and $\Phi(1/2)=1/2$ and is increasing and convex, i.e., both $\Phi$ and $\Phi'$ are positive and increasing on $(0,1/2]$. For $r,\CM \in [0,1/2]$ let
\[\Psi_{\mathcal{C_M}}(r) = \mathcal{C_M} + (1-\mathcal{C_M})\Phi(r)\]
 There exists $\mathcal{C_M}^* \in (0,1/2)$ such that for $0<\mathcal{C_M}<\mathcal{C_M}^*$, $\Psi_{\mathcal{C_M}}$ has two fixed points $0<r_-<r_+<1/2$; for $\mathcal{C_M}=\mathcal{C_M}^*$, $\Psi_{\mathcal{C_M}^*}$ has one fixed point $r^*$; and for $\mathcal{C_M}>\mathcal{C_M}^*$, $\Psi_{\mathcal{C_M}}$ has no fixed points. Moreover, $r_-,r_+$ are $C^1$ functions of $\mathcal{C_M}$ defined on $[0,\mathcal{C_M}^*]$, $r_-$ is increasing, $r_+$ is decreasing, and $\lim_{\CM\to \CM^*} r_-(\CM)=\lim_{\CM\to\CM^*}r_+(\CM)=\Psi(\CM^*)$.
\end{theorem}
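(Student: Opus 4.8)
The plan is to recast fixed points of $\Psi_{\CM}$ as zeros of $g_{\CM}(r):=\Psi_{\CM}(r)-r=\CM+(1-\CM)\Phi(r)-r$ on $[0,1/2]$, and to exploit that $g_{\CM}$ is convex in $r$. Indeed $g_{\CM}'(r)=(1-\CM)\Phi'(r)-1$ is increasing in $r$ because $\Phi'$ is increasing and $1-\CM>0$; hence $g_{\CM}$ is convex and has at most two zeros. The boundary values $g_{\CM}(0)=\CM$ and $g_{\CM}(1/2)=\CM/2$ are both positive for $\CM>0$, so any zeros lie in $(0,1/2)$ and, by convexity, the number of zeros is governed entirely by the sign of $\min_{[0,1/2]}g_{\CM}$: two when it is negative, one (a tangency) when it is zero, none when it is positive.

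Next I would solve the fixed-point relation for $\CM$. Rearranging $r=\CM+(1-\CM)\Phi(r)$ gives $\CM=h(r)$ with $h(r):=(r-\Phi(r))/(1-\Phi(r))$; this is legitimate since $\Phi(r)\le 1/2<1$ on $[0,1/2]$, so the denominator is bounded away from $0$. Thus $(r,\CM)$ solves the fixed-point equation iff $\CM=h(r)$, and counting fixed points at level $\CM$ amounts to counting preimages of $\CM$ under $h$. The function $h$ is $C^1$, satisfies $h(0)=h(1/2)=0$, and is strictly positive on $(0,1/2)$, since convexity of $\Phi$ together with $\Phi(0)=0$ and $\Phi(1/2)=1/2$ forces $\Phi(r)<r$ there. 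Consequently $h$ attains a positive maximum $\CM^*:=\max_{[0,1/2]}h$ at some interior $r^*$. I would obtain the bound $\CM^*<1/2$ cheaply: $h(r)<1/2$ is equivalent to $\Phi(r)>2r-1$, which holds because $2r-1\le 0\le\Phi(r)$ on $[0,1/2)$ and $\Phi(1/2)=1/2>0$.

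With these pieces the trichotomy follows. The ``at most two zeros'' property of the convex $g_{\CM}$ translates into the statement that $h$ takes each value at most twice; for a continuous function with $h(0)=h(1/2)=0$ and $h>0$ inside, this forces $h$ to be strictly increasing on $[0,r^*]$ and strictly decreasing on $[r^*,1/2]$, so $r^*$ is the unique maximizer. Hence for $\CM>\CM^*$ there are no fixed points; for $\CM=\CM^*$ exactly the one point $r^*$; and for $0<\CM<\CM^*$ exactly two, with $r_-(\CM)\in(0,r^*)$ and $r_+(\CM)\in(r^*,1/2)$ (at least two by the intermediate value theorem applied to $h$, at most two by convexity of $g_{\CM}$). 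For smoothness and monotonicity I would invoke the implicit function theorem on $g_{\CM}(r)=0$: since $g_{\CM}$ is convex and positive at both endpoints, it crosses downward at $r_-$ and upward at $r_+$, so $\partial_r g_{\CM}(r_-)<0$ and $\partial_r g_{\CM}(r_+)>0$, both nonzero for $\CM<\CM^*$. This yields $C^1$ branches $r_\pm(\CM)$, and differentiating $g_{\CM}(r_\pm(\CM))=0$ gives $r_\pm'(\CM)=-\partial_\CM g/\partial_r g$ with $\partial_\CM g=1-\Phi(r)>0$; thus $r_-$ is increasing and $r_+$ is decreasing.

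The main obstacle is the bifurcation point $\CM=\CM^*$ itself, where the two branches merge and $\partial_r g_{\CM^*}(r^*)=0$, so the implicit function theorem fails precisely there; I would handle the merging and the common limit by noting that $r_-$ and $r_+$ are bounded and monotone, hence convergent, and that their limits must both equal the unique maximizer $r^*$ of $h$. A secondary point to get right is that the hypotheses supply only $\Phi\in C^1$, not $C^2$, so the argument must avoid $\Phi''$; this is exactly why I deduce the unimodality of $h$ from the convexity of $g_{\CM}$ (equivalently, from monotonicity of $\Phi'$) rather than from a sign analysis of $h''$.
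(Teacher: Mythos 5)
Your proposal is correct, but it takes a genuinely different route from the paper. The paper runs a branch-continuation argument: it works with $F(\CM,r)=\Psi_{\CM}(r)-r$, establishes strict convexity in $r$ and strict monotonicity in $\CM$ (the same two structural facts you use), starts the two branches at the known zeros $(0,0)$ and $(0,1/2)$ via the implicit function theorem, defines $\CM^*$ as the supremum of the interval on which the lower branch can be continued, and then argues that the branches merge there and that $\partial_{\CM}F>0$ excludes fixed points beyond $\CM^*$. You instead invert the fixed-point equation to get $\CM=h(r):=(r-\Phi(r))/(1-\Phi(r))$ and analyze the scalar function $h$: it vanishes at the endpoints, is positive inside (strict convexity of $\Phi$ gives $\Phi(r)<r$ there), and the at-most-two-zeros property of the convex $g_{\CM}$ forces $h$ to be strictly unimodal, so $\CM^*:=\max h$ at the unique maximizer $r^*$ yields the whole trichotomy at once; the implicit function theorem is then needed only for regularity and monotonicity of $r_\pm$. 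Your approach buys two things the paper's does not: an explicit variational formula $\CM^*=\max_{r\in[0,1/2]}(r-\Phi(r))/(1-\Phi(r))$ (which makes the bound $\CM^*<1/2$ a one-line computation and is directly usable numerically), and a cleaner separation between counting fixed points and establishing their smoothness, avoiding the somewhat delicate "supremum of the continuation interval" bookkeeping and the final argument identifying $a^*=\CM^*$ and $w^*=r^*$. The one step you assert rather than prove is that "each value attained at most twice" plus the boundary conditions forces strict unimodality of $h$; this is true but deserves the short intermediate-value argument (two local maxima, or a flat stretch, would produce a value attained at least four, respectively infinitely many, times), after which the limits $r_\pm(\CM)\to r^*$ follow from continuity of $h$ and uniqueness of its maximizer, exactly as you say.
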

\begin{proof}
Let $F(\mathcal{C_M},r)=\Psi_{\mathcal{C_M}}(r)-r$ which is $C^1$ in both $\mathcal{C_M}$ and $r$. Then $r$ is a fixed point of $\Psi_{\mathcal{C_M}}$ iff $F(\mathcal{C_M},r)=0$. Let's show that
\begin{itemize}[noitemsep]
    \item for each $\mathcal{C_M}$, $F(\mathcal{C_M},\cdot)$ is (strictly) convex, and
    \item for each $r$, $F(\cdot,r)$ is (strictly) increasing.
\end{itemize}
For the first point, $\partial_r F(\mathcal{C_M},r) = (1-\mathcal{C_M})\Phi'(r)-1$ is increasing in $r$, since $\Phi'$ is increasing and $0\le \mathcal{C_M} \le 1/2$. For the second point, $\partial_{\mathcal{C_M}} F(\mathcal{C_M},r) = 1-\Phi(r)$ is positive since $0\le \Phi(\cdot)\le 1/2$.

Convexity of $F(\mathcal{C_M},\cdot)$ implies that for any $\mathcal{C_M}$, $F(\mathcal{C_M},\cdot)$ has at most two zeros. By assumption, $F(0,0)=F(0,1/2)=0$ so convexity of $F(0,\cdot)$ implies $\partial_r F(0,0)<0$ and $\partial_r F(0,1/2)>0$. The implicit function theorem implies existence of a unique $C^1$ function $r_-(\mathcal{C_M})$ defined for $\mathcal{C_M}\in [0,\mathcal{C_M}_-)$ for some $\mathcal{C_M}_->0$, satisfying $r_-(0)=0$, $F(\mathcal{C_M},r_-(\mathcal{C_M}))=0$ and

\[r_-'(\mathcal{C_M}) = -\partial_{\mathcal{C_M}} F(\mathcal{C_M},r_-(\mathcal{C_M}))/\partial_r F(\mathcal{C_M},r_-(\mathcal{C_M})).\]

Let $\mathcal{C_M}^*$ be the supremum of values of $\mathcal{C_M}_-$ for which such a function exists. For $\mathcal{C_M}<\mathcal{C_M}^*$, we have $\partial_r F(\mathcal{C_M},r_-(\mathcal{C_M}))\ne 0$, otherwise $r_-'(\mathcal{C_M})$ would be undefined. Moreover, $\partial_r F(\mathcal{C_M}^*,r_-(\mathcal{C_M}^*))=0$, since otherwise, applying the implicit function theorem to $F$ at $(\mathcal{C_M}^*,r_-(\mathcal{C_M}^*))$, the interval $[0,\mathcal{C_M}^*)$ could be made larger, contradicting the definition of $\mathcal{C_M}^*$. Since $\partial_r F(0,0)<0$, by continuity $\partial_r F(\mathcal{C_M},r_-(\mathcal{C_M}))<0$ for $\mathcal{C_M}<\mathcal{C_M}^*$. Further, since $\partial_{\mathcal{C_M}} F>0$, we have $r_-'(\mathcal{C_M})>0$ for $\mathcal{C_M}\in [0,\mathcal{C_M}^*)$. In particular, $\lim_{\mathcal{C_M}\to \mathcal{C_M}^*}r_-(\mathcal{C_M})=:r^*$ exists and by continuity of $F$, we obtain $F(\mathcal{C_M}^*,r^*)=0$. Since $F(\mathcal{C_M}^*,r^*)=\partial_r F(\mathcal{C_M}^*,r^*)=0$ and $F(\mathcal{C_M}^*,\cdot)$ is convex, then $F(\mathcal{C_M}^*,r)>0$ for $r\ne \mathcal{C_M}^*$, i.e., $F(\mathcal{C_M}^*,\cdot)$ has one zero in $[0,1/2]$. Since $\partial_{\mathcal{C_M}} F>0$, we have $F(\mathcal{C_M},r)>0$ for $\mathcal{C_M}>\mathcal{C_M}^*$ and all $r\in[0,1/2]$. An analogous argument gives a function $r_+(\mathcal{C_M})$ defined for $\mathcal{C_M}$ in some interval $[0,a^*)$ satisfying $r_+(0)=1/2$, $F(\mathcal{C_M},r_+(\mathcal{C_M}))=0$, $r_+'(\mathcal{C_M})<0$ and $r_+(\mathcal{C_M})\to w^*$ as $\mathcal{C_M}\to a^*$ for some $w^*$ as well as $F(a^*,w^*)=\partial_r F(a^*,w^*)=0$ and thus $F(\mathcal{C_M},r)>0$ for $\mathcal{C_M}>a^*$ and all $r$. The last part implies $a^*=\mathcal{C_M}^*$ since if $a^*>\mathcal{C_M}^*$ then $F(a^*,w^*)>0$, while if $\mathcal{C_M}^*>a^*$ then $F(\mathcal{C_M}^*,r^*)>0$, both contradictory. Additionally, we have $w^*=r^*$ since $F(\mathcal{C_M}^*,\cdot)$ has one zero.
\end{proof}

\end{appendices}

\end{document}